\pgfplotsset{compat=1.18}
\newcounter{mathseed}
\tikzset
{
	paper/.style =
	{
		draw = MyDarkBlue!10, blur shadow, every shadow/.style = { opacity = 1, MyDarkBlue }, shading = bilinear interpolation, lower left = MyDarkBlue!10, upper left = MyDarkBlue!5, upper right = GreenTeal!75, lower right = MyDarkBlue!5, fill=none
	},
	irregular cloudy border/.style =
	{
		decoration = { irregular fractal line, amplitude = 0.2 }, decorate,
	},
	irregular spiky border/.style =
	{
		decoration = { irregular fractal line, amplitude = -0.2 }, decorate,
	},
	ragged border/.style =
	{
		decoration = {random steps, segment length = 7mm, amplitude = 2mm }, decorate
	}
}
\def\tornpaper#1{%
	\ifthenelse{\isodd{\value{mathseed}}}
	{%
		\tikz
		{
			\node[inner sep = 1em] (A) {#1};		
			\begin{pgfonlayer}{background}			
				\fill[paper]						
				\pgfextra{\pgfmathsetseed{\arabic{mathseed}}\addtocounter{mathseed}{1}}%
				{decorate[irregular cloudy border]{decorate{decorate{decorate{decorate[ragged border]{
										(A.north west) -- (A.north east)
				}}}}}}
				-- (A.south east)
				\pgfextra{\pgfmathsetseed{\arabic{mathseed}}}%
				{decorate[irregular spiky border]{decorate{decorate{decorate{decorate[ragged border]{
										-- (A.south west)
				}}}}}}
				-- (A.north west);
			\end{pgfonlayer}
		}
	}
	{%
		\tikz{
			\node[inner sep=1em] (A) {#1};  
			\begin{pgfonlayer}{background}  
				\fill[paper] 
				\pgfextra{\pgfmathsetseed{\arabic{mathseed}}\addtocounter{mathseed}{1}}%
				{decorate[irregular spiky border]{decorate{decorate{decorate{decorate[ragged border]{
										(A.north east) -- (A.north west)
				}}}}}}
				-- (A.south west)
				\pgfextra{\pgfmathsetseed{\arabic{mathseed}}}%
				{decorate[irregular cloudy border]{decorate{decorate{decorate{decorate[ragged border]{
										-- (A.south east)
				}}}}}}
				-- (A.north east);
		\end{pgfonlayer}}
	}
}
\numberwithin{equation}{section}
\definecolor{MyLightRed}{RGB}{244, 213, 245}
\definecolor{WordRed}{RGB}{255, 0, 102}
\definecolor{RedDarkLightest}{HTML}{ff0088}
\definecolor{RedDarkLight}{HTML}{ea005f}
\definecolor{RedPurple}{HTML}{aa007f}
\definecolor{Purple}{HTML}{911146}
\definecolor{WordLightGreen}{RGB}{140, 214, 192}
\definecolor{WordGreen}{RGB}{0, 176, 80}
\definecolor{GreenLightest}{HTML}{00ffa0}
\definecolor{GreenLighter1}{HTML}{00b383}
\definecolor{GreenLighter2}{HTML}{00aa7f}
\definecolor{GreenDark}{HTML}{225522}
\definecolor{GreenTeal}{HTML}{008080}
\definecolor{WordIceBlue}{RGB}{223, 227, 229}
\definecolor{MyVeryLightBlue}{RGB}{211, 245, 247}
\definecolor{WordBlueVeryLight}{RGB}{0, 176, 240}
\definecolor{WordBlueLight}{RGB}{0, 112, 192}
\definecolor{WordBlueDark}{RGB}{46, 116, 181}
\definecolor{WordBlueDarker}{RGB}{31, 78, 121}
\definecolor{WordBlueDarker25}{RGB}{54, 96, 146}
\definecolor{WordBlueDarker50}{RGB}{36, 64, 98}
\definecolor{WordBlueDarkest}{RGB}{0, 32, 96}
\definecolor{WordBlue}{RGB}{19, 65, 99}
\definecolor{MyBlue}{RGB}{0, 64, 128}
\definecolor{MyDarkBlue}{RGB}{0, 51, 102}
\definecolor{BlueVeryDark}{HTML}{222255}
\definecolor{MagentaVeryLight}{RGB}{178, 162, 201}
\definecolor{MagentaLighter}{RGB}{161, 106, 221}
\definecolor{MagentaLight}{RGB}{128, 100, 162}
\definecolor{MagentaDark}{RGB}{106, 65, 152}
\definecolor{MagentaVeryDark}{RGB}{97, 75, 128}
\definecolor{WordAquaLighter80}{RGB}{218, 238, 243}
\definecolor{WordAquaLighter60}{RGB}{183, 222, 232}
\definecolor{WordAquaLighter40}{RGB}{146, 205, 220}
\definecolor{WordAquaDarker25}{RGB}{49, 134, 155}
\definecolor{WordAquaDarker50}{RGB}{33, 89, 103}
\definecolor{WordVeryLightTeal}{RGB}{223, 236, 235}
\definecolor{WordLightTeal}{RGB}{160, 199, 197}
\definecolor{WordDarkTealLighter80}{RGB}{207, 223, 234}
\definecolor{WordDarkTeal}{RGB}{72, 123, 119}
\definecolor{WordDarkerTeal}{RGB}{48, 82, 80}
\definecolor{WordTurquoiseLighter80}{RGB}{209, 238, 249}
\definecolor{Brown}{HTML}{666633}
\newcommand{\orcidicon}[1]{\href{https://orcid.org/#1}{\includegraphics[height=\fontcharht\font`\B]{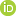}}}
\newtheorem{theorem}{Theorem}[section]
\newtheorem{corollary}[theorem]{Corollary}
\newtheorem{example}{Example}[section]
\title
	{
		Quantum Tapsilou - a quantum game inspired from the traditional Greek coin tossing game tapsilou
	}
\author
	{
		Kalliopi Kastampolidou\orcidicon{0000-0003-3607-9569}
		and
		Theodore Andronikos\orcidicon{0000-0002-3741-1271}
		\\
		Department of Informatics, Ionian University, \\
		7 Tsirigoti Square, 49100 Corfu, Greece; \\
		\{c17kast, andronikos\}@ionio.gr
	}
\begin{document}

\maketitle

\begin{abstract}
	This paper introduces a new quantum game called Quantum Tapsilou that is inspired by the classical traditional Greek coin tossing game tapsilou. The new quantum game, despite its increased complexity and scope, retains the most important characteristic of the traditional game. In the classical game, both players have $\frac { 1 } { 4 }$ probability to win. The quantum version retains this characteristic feature, that is both players have the same probability to win, only now this probability varies considerably and depends on previous moves and choices. The two most important novelties of Quantum Tapsilou can be attributed to its implementation of entanglement via the use of rotation gates instead of Hadamard gates, which generates Bell-like states with unequal probability amplitudes, and the integral use of groups. In Quantum Tapsilou both players agree on a specific cyclic rotation group of order $n$, for some sufficiently large $n$. The game is based on the chosen group, in the sense that both players will draw their moves from its elements. More specifically, both players will pick rotations from this group to realize their actions using the corresponding $R_{ y }$ rotation gates. In the Quantum Tapsilou game, it is equally probable for both players to win. This fact is in accordance with a previous result in the literature showing that quantum games where both players choose their actions from the same group, exhibit perfect symmetry by providing each player with the possibility to pick the move that counteracts the other player's action.
	\\
\textbf{Keywords:}: Classical games, quantum games, quantum coin tossing games, quantum strategies.
\end{abstract}
\section{Introduction} \label{sec: Introduction}

Undoubtedly, game theory has established itself by now as a multifaceted framework that can readily and easily provide a multitude of tools that facilitate the analysis of complex situations, especially those concerning conflict, competition or cooperation. The actors or agents involved in these situations are called players. The fundamental assumption of game theory is that players are always rational, and strive to maximize their profit or minimize their loss. Each player possesses their own set of objectives, and their actions can disrupt the plans of other players. At the conclusion of the game, each player assesses their gains or losses by employing a payoff function. The premise of rationality plays a crucial role in guaranteeing that each agent is motivated to maximize their own payoff. Game theory is particularly suited for tackling serious and difficult problems from a plethora of different fields, ranging from economics, political and social sciences, to computer science, biology and psychology. It is instructive to mention as an example of social and legal significance the possibility to understand and detect criminal activities via the use of evolutionary game theory, as proposed in \cite{Perc2013}. The reader interested in the study of similar critical social issues may consult \cite{Helbing2014} for more references. Evolutionary dynamics offers fresh and unconventional insights to classical games such as the famous Prisoners' Dilemma (see \cite{Perc2008} and \cite{Szolnoki2008} for more details). Additional references on the subject of evolutionary and coevolutionary games can be found in \cite{Perc2013a} and \cite{Perc2010}. In this work we shall be employing only a few elementary concepts from game theory that can be found in standard textbooks such as \cite{Dixit2015}, \cite{Myerson1997} or \cite{Maschler2020}.

The realm of quantum computation holds the promise of ushering us in a new era of computational capabilities. Despite its origins dating back to the early 1980s, it took some time for this field to fully develop and reach its current state. Presently, there are commercial quantum computers in existence, albeit with a rather modest number of qubits. However, it is anticipated that this situation will undergo significant transformation in the years ahead. If these predictions prove accurate, there will be a substantial increase in the number of available qubits. Consequently, this quantum leap will lead to a remarkable enhancement of our computational power, enabling us to address complex decision-making and optimization challenges effectively.

In this broader context, it appears inevitable that quantum games would emerge as a natural consequence. The emerging field of quantum game theory explores the integration of quantum principles and concepts into classical games, such as coin flipping, the Prisoner's Dilemma, and various others.

\subsection{Related work} \label{subsec: Related Work}

It would be historically accurate to say that the inception of the field of quantum games occurred in 1999 when Meyer introduced the PQ penny flip game in his influential paper \cite{Meyer1999}. This game serves as the quantum counterpart to the classical penny flip game. One of the notable features of Meyer's formulation was the incorporation of two well-known fictional characters, Picard and Q, from the renowned television series Star Trek. In this game, Picard embodies the classical player, while Q embodies the quantum player. This signifies that Picard follows the rules of the classical penny flip game, while Q can utilize arbitrary unitary operators, a hallmark of the quantum realm. Meyer demonstrated that in this game, Q can consistently emerge victorious by employing the venerable Hadamard transform. Another influential work, also from the same year, was the one by Eisert et al. in \cite{Eisert1999}. There they demonstrated for the first time the so called Eisert-Wilkens-Lewenstein protocol. This protocol has gained wide acceptance and is now extensively used in the literature. Using their protocol, they defined a variant of the famous Prisoners' Dilemma and demonstrated that there exists a quantum strategy that is better than any classical strategy. Subsequently, many researchers followed that line of research, obtaining interesting results. For more recent developments we refer to \cite{Giannakis2019} and \cite{Rycerz2020}, where the Eisert–Wilkens–Lewenstein protocol was generalized.

Subsequently, many researchers, inspired by Meyer's concepts, extended this game pattern to $n$-dimensional quantum systems. This avenue of exploration was pursued in \cite{Wang2000, Ren2007} and \cite{Salimi2009}. Later, a comprehensive mapping of every finite variation of the PQ penny flip game to easily constructed finite automata was provided in \cite{Andronikos2018}. Recently, this same game underwent extensive analysis from the perspective of group theory, with its relation to dihedral groups being established in \cite{Andronikos2021}. A general analysis of quantum games under the assumption that both players draw their move from groups and not just unstructured sets, coming to some interesting conclusions, can be found in \cite{Andronikos2022a}. While in many works, the quantum player appears to possess a distinct advantage over the classical player, the situation is not as straightforward, as demonstrated in \cite{Anand2015}. In that study, the authors cleverly altered the rules of the PQ penny flip game, allowing the classical player to secure a victory.

Additionally, a broader and intriguing problem concerning quantum gambling based on Nash equilibrium has been explored recently in \cite{Zhang2017}. The connection between infinitely repeated games and finite automata was also studied by \cite{Neyman1985}, \cite{Rubinstein1986}, \cite{Abreu1988}, and \cite{Marks1990}. Later, Meyer used quantum lattice gas automata to study Parrondo games in \cite{Meyer2002}. Employing probabilistic automata to tackle the Prisoners' Dilemma game was undertaken by \cite{Bertelle2002}, whereas the authors in \cite{Suwais2014} considered the advantages of different automata variants in game theory. In a similar vein, quantum coin flipping has served as a crucial component in numerous quantum cryptographic protocols, where familiar figures like Alice and Bob take on the roles of remote parties striving to reach a consensus on a random bit. For a comprehensive presentation, the seminal work in \cite{Bennett2014} is invaluable, and for some state-of-the-art cryprographic protocols we refer to \cite{Ampatzis2021, Ampatzis2022, Ampatzis2023, Andronikos2023, Andronikos2023a}). Some of these ideas have been extended in \cite{Aharon2010} to quantum dice rolling in scenarios involving multiple outcomes and more than two parties. Most of the papers mentioned above, rely in an integral way on the phenomenon of quantum entanglement, a trend that is also followed by the current work.

Another influential work, also from 1999, was the one by Eisert et al. in \cite{Eisert1999}. There they demonstrated for the first time the so called Eisert-Wilkens-Lewenstein protocol. This protocol has gained wide acceptance and is now extensively used in the literature \cite{AlonsoSanz2019}. Using their protocol, they defined a variant of the famous Prisoners' Dilemma and demonstrated that there exists a quantum strategy that is better than any classical strategy. Subsequently, many researchers followed that line of research, obtaining interesting results. For some recent developments one may also consult \cite{Rycerz2020}, where the Eisert–Wilkens–Lewenstein protocol was generalized. Let us also point out that the inclination to employ unconventional approaches for improving outcomes in classical game s extends beyond the realm of quantum mechanics. Even in the domain of well-known classical games, such as the Prisoners' Dilemma, efforts have been made to interpret them in terms of biological concepts (for a comprehensive overview, refer to \cite{Kastampolidou2020}).

Let us also point out that the inclination to employ unconventional approaches for improving outcomes in classical games extends beyond the realm of quantum mechanics. Even in the domain of well-known classical games, such as the Prisoners' Dilemma, efforts have been made to interpret them in terms of biological concepts (for a comprehensive overview, refer to \cite{Kastampolidou2020}). In fact, many classical games can be reinterpreted within the context of biological and bio-inspired processes, as evidenced by numerous references (for more sources, see \cite{Kastampolidou2020, Theocharopoulou2019, Kastampolidou2020a}).

\textbf{Contribution}. This paper introduces Quantum Tapsilou, which is a novel quantum game inspired by a traditional Greek coin tossing game called tapsilou. While the classical game has a simple and fair probability distribution, with both players having a $\frac { 1 } { 4 }$ probability of winning, Quantum Tapsilou exhibits an additional level of complexity using quantum mechanics. In Quantum Tapsilou, both players still have equal chances of winning, but these probabilities are now influenced by their previous choices and the use of quantum principles. There are two key innovations in Quantum Tapsilou:

\begin{enumerate}
	\item	\textbf{Entanglement through Rotation Gates}. Instead of using Hadamard gates, which are commonly associated with creating entangled states with equal probability amplitudes, Quantum Tapsilou employs rotation gates, specifically $R_y$ rotation gates. These gates generate Bell-like states with unequal probability amplitudes, which adds a layer of complexity to the game dynamics.
	\item	\textbf{Integral Use of Groups}. In Quantum Tapsilou, both players agree on a specific cyclic rotation group of order $n$, where $n$ should be sufficiently large integer in order to provide the players with additional choices and increase the suspense of the game. This group forms the basis of the game, and both players select rotations from this group to perform their actions using the corresponding $R_y$ rotation gates. The choice of this group and the rotations within it significantly impacts the game's outcome, introducing a strategic element based on group theory.
\end{enumerate}

By incorporating these quantum principles and group theory, Quantum Tapsilou retains the characteristic aspects of the classical tapsilou game while adding complexity and variability to the probabilities of winning. Players must consider their choices of rotations within the agreed-upon group, as well as the entangled states generated by the $R_y$ rotation gates, to maximize their chances of winning in this intriguing quantum adaptation of a traditional Greek game. This is another example of the critical role of groups in ensuring a symmetric correspondence of actions in quantum games in the sense that for each action of each player, the other player can come up with a move to counteract it, a fact that is in accordance with a previous result in \cite{Andronikos2022a}.

\subsection*{Organization} \label{subsec: Organization}

This paper is organized as follows: Section \ref{sec: Introduction} gives a concise introduction to the field of quantum games, along with some of the most relevant references. Section \ref{sec: The Traditional Game Tapsilou} presents the traditional Greek game tapsilou, while section \ref{sec: Two Coins Seem To Be The Proper Choice} explains why the original game is meaningful and well-conceived.
Sections \ref{sec: The Intuition Behind the Quantum Tapsilou} and \ref{sec: The Formal Presentation of the QTG} give an easygoing account of the intuition behind the quantization of the classical game, and the formal presentation of the new quantum game, respectively. Section \ref{sec: The Mathematical Analysis of the QTG} contains an extensive and rigorous analysis of the main properties of the proposed game. Finally, Section \ref{sec: Discussion and Conclusions} summarizes the main conclusions of this work and contains some possible research directions for future work.

\section{The traditional game ``tapsilou''} \label{sec: The Traditional Game Tapsilou}

The Greek gambling game ``tapsilou'' is a coin tossing game that originates from the Greek island of Corfu. The name ``tapsilou'' is a local idiom meaning ``up high.'' This is a traditional game played in New Year’s Eve and New Year’s Day in Mantouki, Corfu. It is also known as ``crosses and marks'' (``stavri or markoi'') and goes back in the history of the island. It was extremely popular in the 60s, 70s and 80s among the young people and it was tradition to be played around the New Year celebration. Although it is considered a game, for many of the participants meant their economic ruin \cite{drepaniwebsite}.

Tapsilou needs two coins. They used mostly Greek currency (drachmas), the so called 5-drachmas (``talara or talira''), like those depicted in Figure \ref{fig:Drachmae}. It was the biggest in diameter coin of the days. The coin depicts a figure in one side, the \emph{heads} side, and a number on the other, the \emph{tails} side. Thus, the alternative name ``crosses and marks.'' As a terrain, the participants typically chose a sandlot. Two coins are placed in one of the player’s two-or-three fingers, as shown in Figure \ref{fig:Coin Tossing}. This player is the \emph{tosser} and from now on will also be referred to as \emph{player 1}. The initial position of the coins on the fingers is one coin with the face up and one with the cross. The other players, usually more than one, are the \emph{gamblers}, and, henceforth, will also be collectively referred to as \emph{player 2}. The gamblers draw a circle around the tosser, who throws the coins high to the sky, thus the origin of the name ``tapsilou.''  After the flip, the coins end up on the ground. If the coins end up with two faces, the tosser wins. If the coins end up with crosses, the gamblers win, and in case the coins end up with one face and one cross, the tossing is repeated until there is a winner.

\begin{tcolorbox}
	[
		grow to left by = 0.00 cm,
		grow to right by = 0.00 cm,
		colback = WordTurquoiseLighter80!12,	
		enhanced jigsaw,						
		sharp corners,
		toprule = 1.0 pt,
		bottomrule = 1.0 pt,
		leftrule = 0.1 pt,
		rightrule = 0.1 pt,
		sharp corners,
		center title,
		fonttitle = \bfseries
	]
	\begin{figure}[H]
		\centering
		\includegraphics[scale = 0.50, trim = {0cm 0cm 0cm 0cm}, clip]{"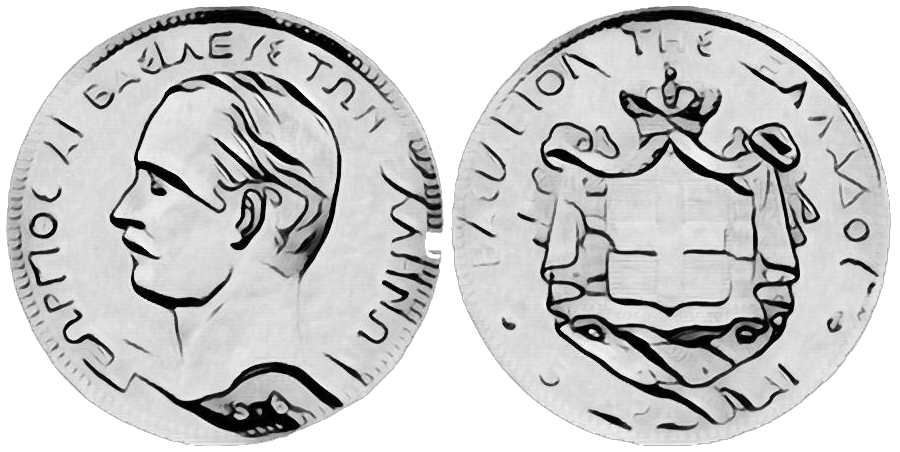"}
		\caption{A typical example of coins from the 60s \cite{collectorwebsite}.}
		\label{fig:Drachmae}
	\end{figure}
\end{tcolorbox}

\begin{tcolorbox}
	[
		grow to left by = 0.00 cm,
		grow to right by = 0.00 cm,
		colback = WordTurquoiseLighter80!12,	
		enhanced jigsaw,						
		sharp corners,
		toprule = 1.0 pt,
		bottomrule = 1.0 pt,
		leftrule = 0.1 pt,
		rightrule = 0.1 pt,
		sharp corners,
		center title,
		fonttitle = \bfseries
	]
	\begin{figure}[H]
		\centering
		\includegraphics[scale = 0.75, trim = {0cm 0cm 0cm 0cm}, clip]{"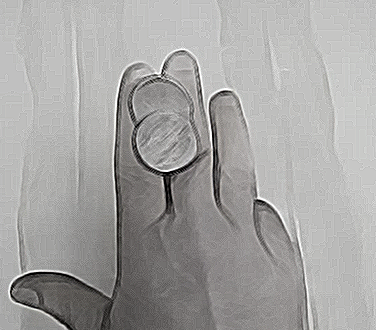"}
		\caption{The initial placement of the two coins, just before the tossing \cite{corfuwebsite}.}
		\label{fig:Coin Tossing}
	\end{figure}
\end{tcolorbox}

Although tapsilou requires at least 2 players, the tosser and at least one gambler, there is no limit as to what the maximum number of players is. It would not be unusual to have fifteen or more players, as long as there is an amount of money, large enough to bet. Before the tossing, the gamblers must announce their bet. If the two coins come up faces up, the tosser wins and takes all the bet money, whereas if they both come up crosses, the tosser pays each gambler twice the agreed bet. If the result is one coin face up and one coin cross, then the gamblers choose, individually, whether they will keep the same bet or raise it, and the tosser throws the coins again. This process is repeated, until the tosser or the gamblers win. The tosser loses when he has no more money left. If the tossers budget is running low, he may decide to take no more bets, or up to a certain amount, so as to not end up in debt. When the tosser loses all his money, another player takes his place.


Tapsilou is a forbidden game, so in many cases there may be lookout men. There are two common cheating strategies in this game. The most tricky one is to crack the coin just a little bit, so as to always drop from one side. This, of course, makes the coins biased, but the crack has to be unnoticeable. The other cheating strategy is during the tossing to employ a technique where the two coins seem to flip in the air before they end up in the ground, but in fact only one flips, while the other does not flip at all. This is just an illusion that only the best trained cheaters can achieve. If a gambler recognizes a certain cheat strategy, he has the right to call it (using Greek idioms such as``den paei'' or ``ta kovo'') and the bet is canceled.

\section{Two coins seem to be the proper choice} \label{sec: Two Coins Seem To Be The Proper Choice}

In this section, we give a precise quantitative description of the game, and examine its possible generalizations.

Assuming that no cheating takes place, we see that the classical version of tapsilou is characterized by the following properties, where we use $H$ and $T$ to denote heads and tails, respectively.

\begin{enumerate} [ left = 0.50 cm, labelsep = 1.00 cm ]
	\renewcommand\labelenumi{(\textbf{C}$_\theenumi$)}
	\item	The number of players is $2$. Player $1$ is the tosser, and player $2$ is the gambler. Despite the fact that in real life many persons may play the gambler simultaneously, they can all be identified as one player in a theoretical analysis.
	\item	The number of coins is $2$.
	\item	Viewing the two coins as a system, with no particular ordering between them, the initial state of the system can taken to be $(H, T)$ or, equivalently, $(T, H)$. So, without loss of generality, we may assume from now on that the initial state is $(H, T)$.
	\item	If the final state is $(H, H)$, then player $1$ wins. If the final state is $(T, T)$, then player $2$ wins. In any other case, the game goes on to the next round.
	\item	The probability that player $1$ wins is equal to the probability that player $2$ wins, namely $\frac { 1 } { 4 }$.
	\item	The probability that neither player wins is $\frac { 1 } { 2 }$.
\end{enumerate}

A question that comes to mind is whether it would be possible to generalize this game by tossing more than two coins. Under the plausible assumption that in such a generalization player $1$ wins if all coins end up heads and player $2$ wins if all coins end up tails, it is easy to calculate the probability for each player to win. If $n$ coins are used, the condition for player $1$ to win is that all $n$ coins end up heads, an event that may happen with probability $\frac { 1 } { 2^{ n } }$. Symmetrically, for player $2$ the probability that all $n$ coins come up tails is also $\frac { 1 } { 2^{ n } }$. The next Table \ref{tbl:Probabilities of Classical Generalized Tapsilou} gives the numerical values of these probabilities when the number of coins is $3$, $4$ and $5$. A clear pattern that emerges. If tapsilou were to use more than two coins, the game would not be very interesting because the probability of neither player winning and just spending most of the gaming tossing coins would be far greater, than having a winner. Therefore, using exactly two coins seems to be the best choice in order to have a game with suspense and symmetry.

\begin{tcolorbox}
	[
		grow to left by = 0.00 cm,
		grow to right by = 0.00 cm,
		colback = WordIceBlue!25,	%
		enhanced jigsaw,			
		sharp corners,
		boxrule = 0.1 pt,
		toprule = 0.1 pt,
		bottomrule = 0.1 pt
	]
	\begin{table}[H]
		\renewcommand{\arraystretch}{1.60}
		\caption{This table gives the probabilities of players $1$ and $2$ to win in a possible generalization of the game tapsilou.}
		\label{tbl:Probabilities of Classical Generalized Tapsilou}
		\centering
		\begin{tabular}
			{
				>{\centering\arraybackslash} m{2.00 cm} !{\vrule width 0.5 pt}
				>{\centering\arraybackslash} m{3.00 cm} !{\vrule width 0.5 pt}
				>{\centering\arraybackslash} m{3.00 cm} !{\vrule width 0.5 pt}
				>{\centering\arraybackslash} m{3.00 cm}
			}
			\Xhline{4\arrayrulewidth}
			\multicolumn{4}{c}
			{ \cellcolor[HTML]{000000} { \color[HTML]{FFFFFF} \textbf{Coins \& Probability} } }
			\\
			\Xhline{\arrayrulewidth}
			{ \cellcolor[HTML]{000000} { \color[HTML]{FFFFFF} \textbf{\# of Coins} } }
			&
			$P( \text{ player 1 wins } )$
			&
			$P( \text{ player 2 wins } )$
			&
			$P( \text{ neither wins } )$
			\\
			\Xhline{3\arrayrulewidth}
			{ \cellcolor[HTML]{000000} { \color[HTML]{FFFFFF} $3$ } }
			&
			\xintDigits*:=3;
			\xintdeffloatfunc	Events ( m ) := 2^{ m };
			\xintdeffloatfunc	Probability ( m ) := 1 / Events ( m );
			$\xintfloateval{ Probability ( 3 ) }$
			&
			\xintDigits*:=3;
			\xintdeffloatfunc	Events ( m ) := 2^{ m };
			\xintdeffloatfunc	Probability ( m ) := 1 / Events ( m );
			$\xintfloateval{ Probability ( 3 ) }$
			&
			\xintDigits*:=3;
			\xintdeffloatfunc	Events ( m ) := 2^m;
			\xintdeffloatfunc	Probability ( m ) := 1 - ( 2 / Events ( m ) );
			\xintfloateval{ Probability ( 3 ) }
			\\
			\Xhline{\arrayrulewidth}
			{ \cellcolor[HTML]{000000} { \color[HTML]{FFFFFF} $4$ } }
			&
			\xintDigits*:=3;
			\xintdeffloatfunc	Events ( m ) := 2^{ m };
			\xintdeffloatfunc	Probability ( m ) := 1 / Events ( m );
			$\xintfloateval{ Probability ( 4 ) }$
			&
			\xintDigits*:=3;
			\xintdeffloatfunc	Events ( m ) := 2^{ m };
			\xintdeffloatfunc	Probability ( m ) := 1 / Events ( m );
			$\xintfloateval{ Probability ( 4 ) }$
			&
			\xintDigits*:=3;
			\xintdeffloatfunc	Events ( m ) := 2^m;
			\xintdeffloatfunc	Probability ( m ) := 1 - ( 2 / Events ( m ) );
			\xintfloateval{ Probability ( 4 ) }
			\\
			\Xhline{\arrayrulewidth}
			{ \cellcolor[HTML]{000000} { \color[HTML]{FFFFFF} $5$ } }
			&
			\xintDigits*:=3;
			\xintdeffloatfunc	Events ( m ) := 2^{ m };
			\xintdeffloatfunc	Probability ( m ) := 1 / Events ( m );
			$\xintfloateval{ Probability ( 5 ) }$
			&
			\xintDigits*:=3;
			\xintdeffloatfunc	Events ( m ) := 2^{ m };
			\xintdeffloatfunc	Probability ( m ) := 1 / Events ( m );
			$\xintfloateval{ Probability ( 5 ) }$
			&
			\xintDigits*:=3;
			\xintdeffloatfunc	Events ( m ) := 2^m;
			\xintdeffloatfunc	Probability ( m ) := 1 - ( 2 / Events ( m ) );
			\xintfloateval{ Probability ( 5 ) }
			\\
			\Xhline{4\arrayrulewidth}
		\end{tabular}
		\renewcommand{\arraystretch}{1.0}
	\end{table}
\end{tcolorbox}

In view of our previous analysis, it would seem that the classical world cannot offer a meaningful generalization of tapsilou. If one hopes to achieve an interesting variation of the classical game, turning to the quantum realm seems to be the most promising option.

\section{The intuition behind the quantum tapsilou} \label{sec: The Intuition Behind the Quantum Tapsilou}

In this section we explain the rationale behind the proposed Quantum Tapsilou game (QTG from now), which, we believe, that captures the spirit of the classical tapsilou game. In the QTG, kets $\ket{ 0 }$ and $\ket{ 1 }$ correspond to the classical heads and tails states, respectively. Consequently, the initial state of the QTG is $\ket{ 0 } \ket{ 1 }$, and player $1$ (player $2$) wins if the final state of the system (after measurement) is $\ket{ 0 } \ket{ 0 }$ ($\ket{ 1 } \ket{ 1 }$). The following Table \ref{tbl: Classical vs Quantum Tapsilou Characteristics} summarizes these facts.

\begin{tcolorbox}
	[
		grow to left by = 0.00 cm,
		grow to right by = 0.00 cm,
		colback = WordIceBlue!25,	%
		enhanced jigsaw,			
		sharp corners,
		boxrule = 0.1 pt,
		toprule = 0.1 pt,
		bottomrule = 0.1 pt
	]
	\begin{table}[H]
		\renewcommand{\arraystretch}{1.60}
		\caption{This table compares the characteristics of the classical tapsilou against the corresponding characteristics of the quantum tapsilou.}
		\label{tbl: Classical vs Quantum Tapsilou Characteristics}
		\centering
		\begin{tabular}
			{
				>{\centering\arraybackslash} m{1.75 cm} !{\vrule width 0.5 pt}
				>{\centering\arraybackslash} m{1.75 cm} !{\vrule width 0.5 pt}
				>{\centering\arraybackslash} m{1.75 cm} !{\vrule width 0.5 pt}
				>{\centering\arraybackslash} m{2.25 cm} !{\vrule width 0.5 pt}
				>{\centering\arraybackslash} m{2.25 cm}
			}
			\Xhline{4\arrayrulewidth}
			\multicolumn{5}{c}
			{ \cellcolor[HTML]{000000} { \color[HTML]{FFFFFF} \textbf{Classical vs quantum tapsilou} } }
			\\
			\Xhline{\arrayrulewidth}
			{ \cellcolor[HTML]{000000} { \color[HTML]{FFFFFF} } }
			&
			State
			&
			Initial State
			&
			Winning State for Player $1$
			&
			Winning State for Player $2$
			\\
			\Xhline{3\arrayrulewidth}
			{ \cellcolor[HTML]{000000} { \color[HTML]{FFFFFF} \textbf{Classical} } }
			&
			$H$ (Heads) $T$ (Tails)
			&
			$(H, T)$
			&
			$(H, H)$
			&
			$(T, T)$
			\\
			\Xhline{\arrayrulewidth}
			{ \cellcolor[HTML]{000000} { \color[HTML]{FFFFFF} \textbf{Quantum} } }
			&
			$\ket{ 0 }$ \qquad \qquad $\ket{ 1 }$
			&
			$\ket{ 0 } \ket{ 1 }$
			&
			$\ket{ 0 } \ket{ 0 }$
			&
			$\ket{ 1 } \ket{ 1 }$
			\\
			\Xhline{4\arrayrulewidth}
		\end{tabular}
		\renewcommand{\arraystretch}{1.0}
	\end{table}
\end{tcolorbox}

Let us first assume that player $1$ (the tosser) can employ quantum actions to his advantage. In order for him to win, both coins must end up heads. This is easily achievable in the quantum realm using entanglement. Thus, entanglement is the key for player $1$ to ensure that both coins end up in the same state. The typical quantum circuit used to generate entangled pairs, which is based on the Hadamard transform, is shown in Figure \ref{fig: Phi-QC}. The quantum circuit of Figure \ref{fig: Phi-QC}, and all forthcoming quantum circuits, are designed using the IBM Quantum Composer \cite{IBMQuantumComposer2023}. The same tool is used for the visualization of the state vector of the quantum system. This particular circuit will drive the system of the two coins in the $\ket{ \Phi^{ - } }$ state, which is characterized by the state vector depicted in Figure \ref{fig: Phi-SV}. However, if the coins are measured in this state, the tosser does not have any real advantage, as both players have equal probability $\frac { 1 } { 2 }$ to win.

\begin{tcolorbox}
	[
		grow to left by = 1.50 cm,
		grow to right by = 1.50 cm,
		colback = white,	
		enhanced jigsaw,	
		sharp corners,
		toprule = 1.0 pt,
		bottomrule = 1.0 pt,
		leftrule = 0.1 pt,
		rightrule = 0.1 pt,
		sharp corners,
		center title,
		fonttitle = \bfseries
	]
	\begin{figure}[H]
		\begin{minipage}[t]{0.45 \textwidth}
			\centering
			\includegraphics[scale = 0.45, trim = {0cm 0cm 0cm 0cm}, clip]{"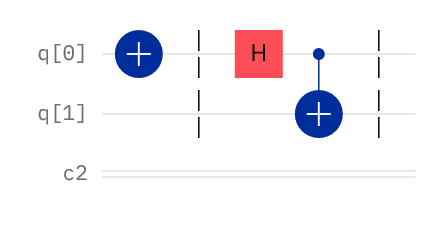"}
			\caption{This figure shows a typical quantum circuit that generates $\ket{ \Phi^{ - } }$ pairs, given that the initial state is $\ket{ 0 } \ket{ 1 }$.}
			\label{fig: Phi-QC}
		\end{minipage}
		\hfill
		\begin{minipage}[t]{0.5 \textwidth}
			\centering
			\includegraphics[scale = 0.21, trim = {0cm 6cm 16cm 0cm}, clip]{"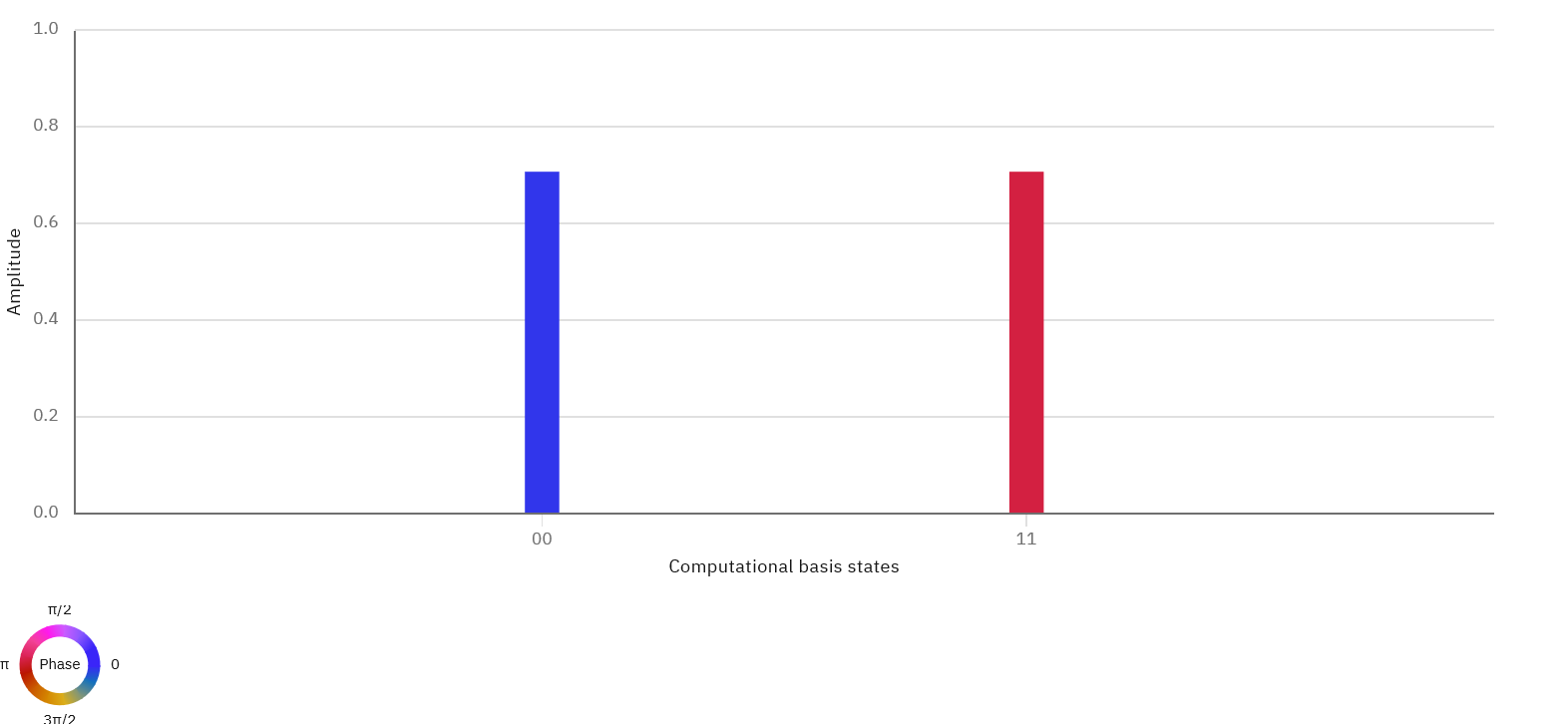"}
			\caption{This figure depicts the state vector of the system of the two coins entangled in the $\ket{ \Phi^{ - } }$ state. }
			\label{fig: Phi-SV}
		\end{minipage}
	\end{figure}
\end{tcolorbox}

Additionally, player $1$ can drive the system to the $\ket{ \Psi^{ - } }$, via the circuit of Figure \ref{fig: Psi-QC}. As shown in Figure \ref{fig: Psi-SV}, if the coins are measured in this state, it is certain that nobody wins, and the game keeps going on ad infinitum. This is another option, albeit less likely, that the tosser may pursue as a bluff.

\begin{tcolorbox}
	[
		grow to left by = 1.50 cm,
		grow to right by = 1.50 cm,
		colback = white,	
		enhanced jigsaw,	
		sharp corners,
		toprule = 1.0 pt,
		bottomrule = 1.0 pt,
		leftrule = 0.1 pt,
		rightrule = 0.1 pt,
		sharp corners,
		center title,
		fonttitle = \bfseries
	]
	\begin{figure}[H]
		\begin{minipage}[t]{0.45 \textwidth}
			\centering
			\includegraphics[scale = 0.39, trim = {0cm 0cm 0cm 0cm}, clip]{"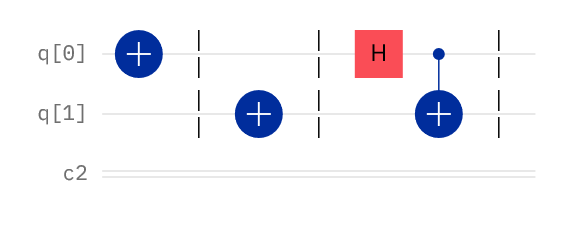"}
			\caption{Assuming that the initial state is $\ket{ 0 } \ket{ 1 }$, the above figure depicts a typical quantum circuit that generates $\ket{ \Psi^{ - } }$ pairs.}
			\label{fig: Psi-QC}
		\end{minipage}
		\hfill
		\begin{minipage}[t]{0.5 \textwidth}
			\centering
			\includegraphics[scale = 0.21, trim = {0cm 6cm 16cm 0cm}, clip]{"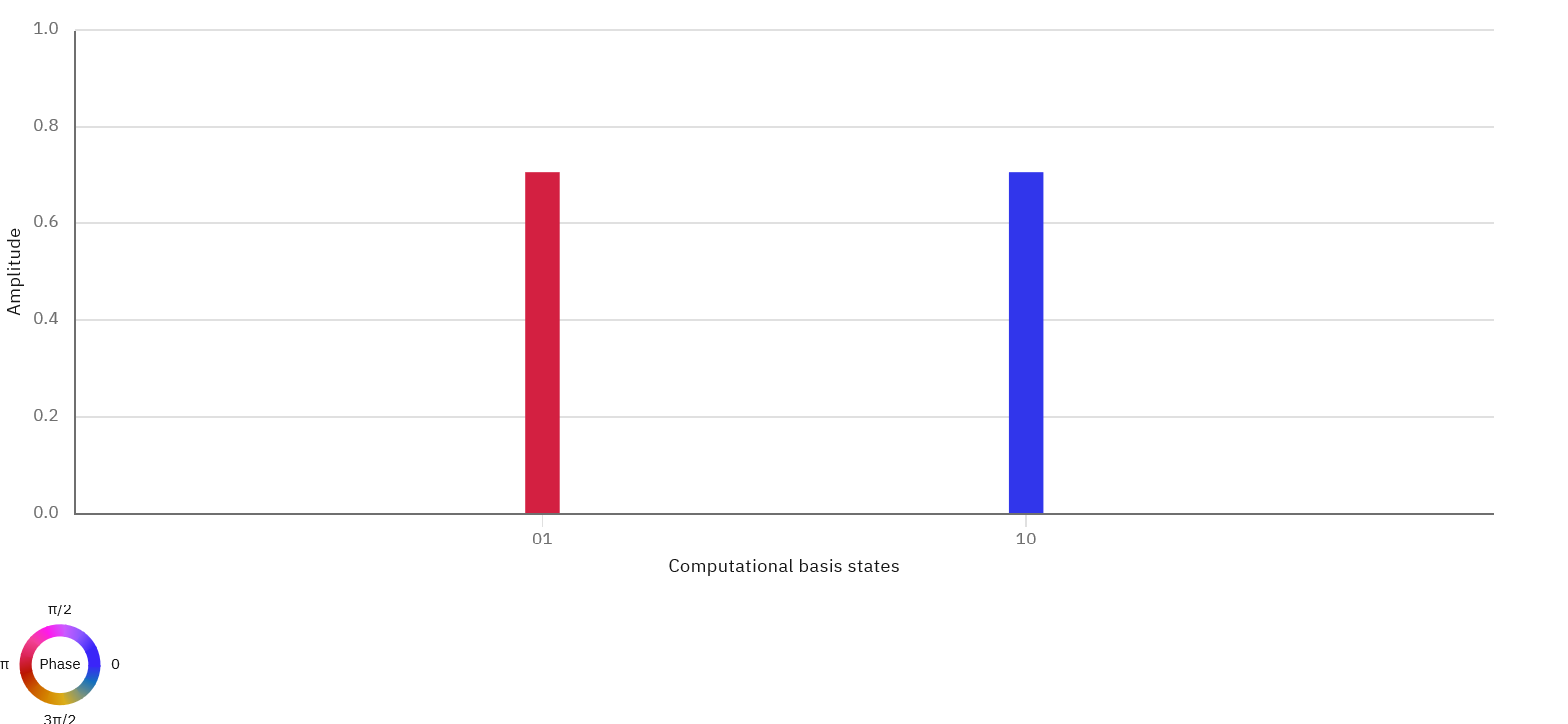"}
			\caption{The above figure visualizes the state vector of the two coin system in the $\ket{ \Psi^{ - } }$ state. }
			\label{fig: Psi-SV}
		\end{minipage}
	\end{figure}
\end{tcolorbox}

Therefore, from the perspective of player $1$, it becomes evident that he must entangle the two coins in a more subtle way. Logic suggests that player $1$ should devise a kind of entanglement that satisfies the following requirements.

\begin{itemize}
	\item	Both coins end up in the \emph{same} state with certainty, i.e., with probability $1$.
	\item	While it may still be possible that both coins may come up tails, the probability that both coins may end up heads must be significantly higher. In other words, the odds should favor player $1$ decisively.
\end{itemize}

One way that player $1$ may achieve such an effect is to use rotation gates in place of the Hadamard gate. Towards this end, the simplest choice would be to use the quantum gate $R_{ y } ( \theta )$ with an  appropriate parameter $\theta \in \mathbb{R}$. $R_{ y } ( \theta )$, which is defined as

\begin{align} \label{eq: Ry Rotation Operator}
	R_{ y } ( \theta )
	=
	\begin{bmatrix}
		\cos \frac { \theta } { 2 } & - \sin \frac { \theta } { 2 }
		\\
		\sin \frac { \theta } { 2 } & \phantom{-} \cos \frac { \theta } { 2 }
	\end{bmatrix}
	\ ,
\end{align}

rotates the state of the coin around the $y$ axis of the Bloch sphere through an angle $\frac { \theta } { 2 }$ counterclockwise. Figure \ref{fig: R12piBy16QC} illustrates a quantum circuit that employs the rotation gate $R_{ y } (  \frac { 12 \pi } { 16 } )$ to send the two coins to the state described by Figure \ref{fig: R12piBy16SV}.

\begin{tcolorbox}
	[
		grow to left by = 1.50 cm,
		grow to right by = 1.50 cm,
		colback = white,	
		enhanced jigsaw,	
		sharp corners,
		toprule = 1.0 pt,
		bottomrule = 1.0 pt,
		leftrule = 0.1 pt,
		rightrule = 0.1 pt,
		sharp corners,
		center title,
		fonttitle = \bfseries
	]
	\begin{figure}[H]
		\begin{minipage}[t]{0.45 \textwidth}
			\centering
			\includegraphics[scale = 0.40, trim = {0cm 0cm 0cm 0cm}, clip]{"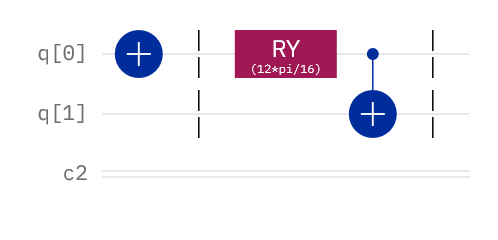"}
			\caption{Given that the initial state is $\ket{ 0 } \ket{ 1 }$, the above quantum circuit applies the rotation gate $R_{ y } (  \frac { 12 \pi } { 16 } )$ to drive the system to the state visualized in Figure \ref{fig: R12piBy16SV}.}
			\label{fig: R12piBy16QC}
		\end{minipage}
		\hfill
		\begin{minipage}[t]{0.5 \textwidth}
			\centering
			\includegraphics[scale = 0.22, trim = {0cm 6cm 10cm 0cm}, clip]{"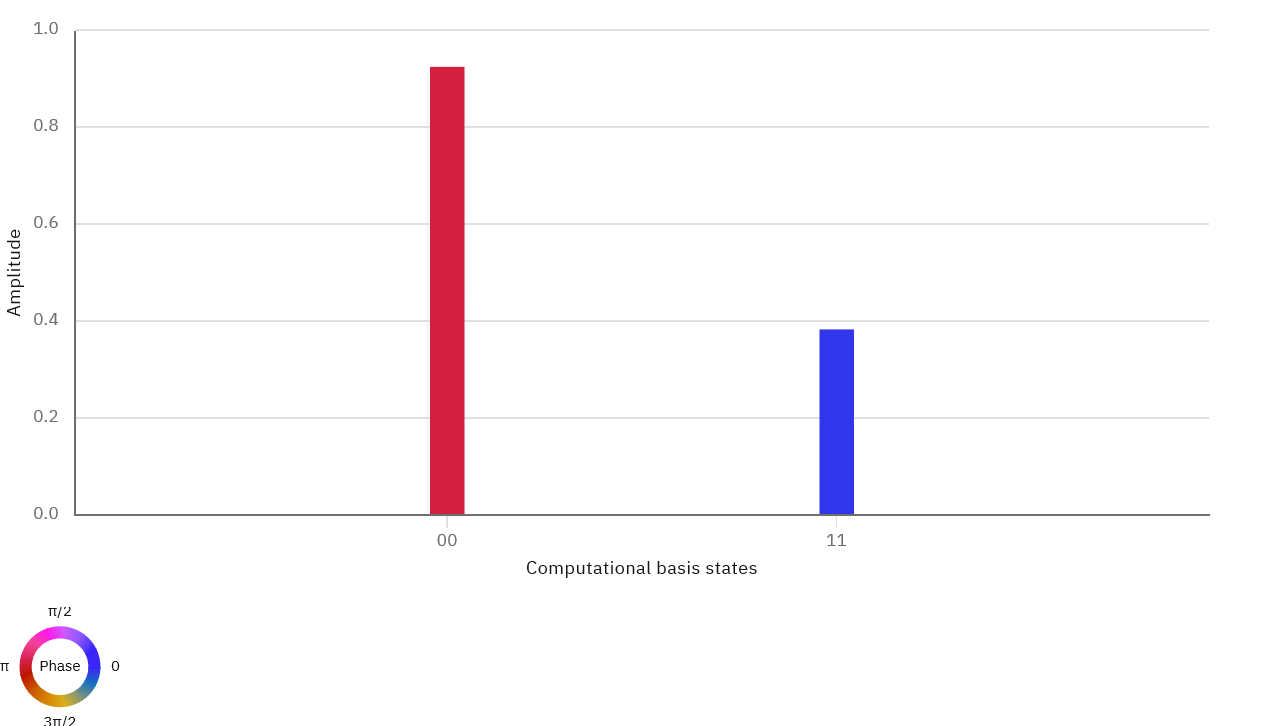"}
			\caption{This circuit of Figure \ref{fig: R12piBy16QC} guarantees that both coins will end up in the same state, but state $\ket{ 0 } \ket{ 0 }$ is much more probable.}
			\label{fig: R12piBy16SV}
		\end{minipage}
	\end{figure}
\end{tcolorbox}

In this scenario, the final state of the two coins is determined by measurements in the computational basis $B = \{ \ket{ 0 }, \ket{ 1 } \}$. The outcome that stands out as particularly probable has player $1$ emerging as the victor and player $2$ experiencing a loss. These measurements and their associated probabilities determine the winner and the loser in the game. Obviously, in such a case, player $2$, being rational, has no incentive whatsoever to participate in the game. There is, however, a way to make the game suspenseful again by letting player $2$ choose the basis with respect to which the coins are measured.

\begin{tcolorbox}
	[
		grow to left by = 1.50 cm,
		grow to right by = 1.50 cm,
		colback = white,	
		enhanced jigsaw,	
		sharp corners,
		toprule = 1.0 pt,
		bottomrule = 1.0 pt,
		leftrule = 0.1 pt,
		rightrule = 0.1 pt,
		sharp corners,
		center title,
		fonttitle = \bfseries
	]
	\begin{figure}[H]
		\begin{minipage}[t]{0.45 \textwidth}
			\centering
			\includegraphics[scale = 0.30, trim = {0cm 0cm 0cm 0cm}, clip]{"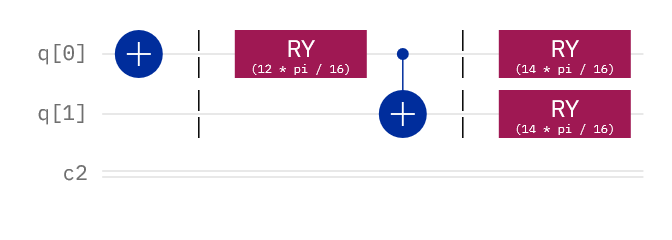"}
			\caption{The effect of the application of the two rotation gates prior to measurement is that the measuring basis is no longer the computational basis.}
			\label{fig: B3piBy4QC}
		\end{minipage}
		\hfill
		\begin{minipage}[t]{0.52 \textwidth}
			\centering
			\includegraphics[scale = 0.22, trim = {0cm 6cm 8cm 0cm}, clip]{"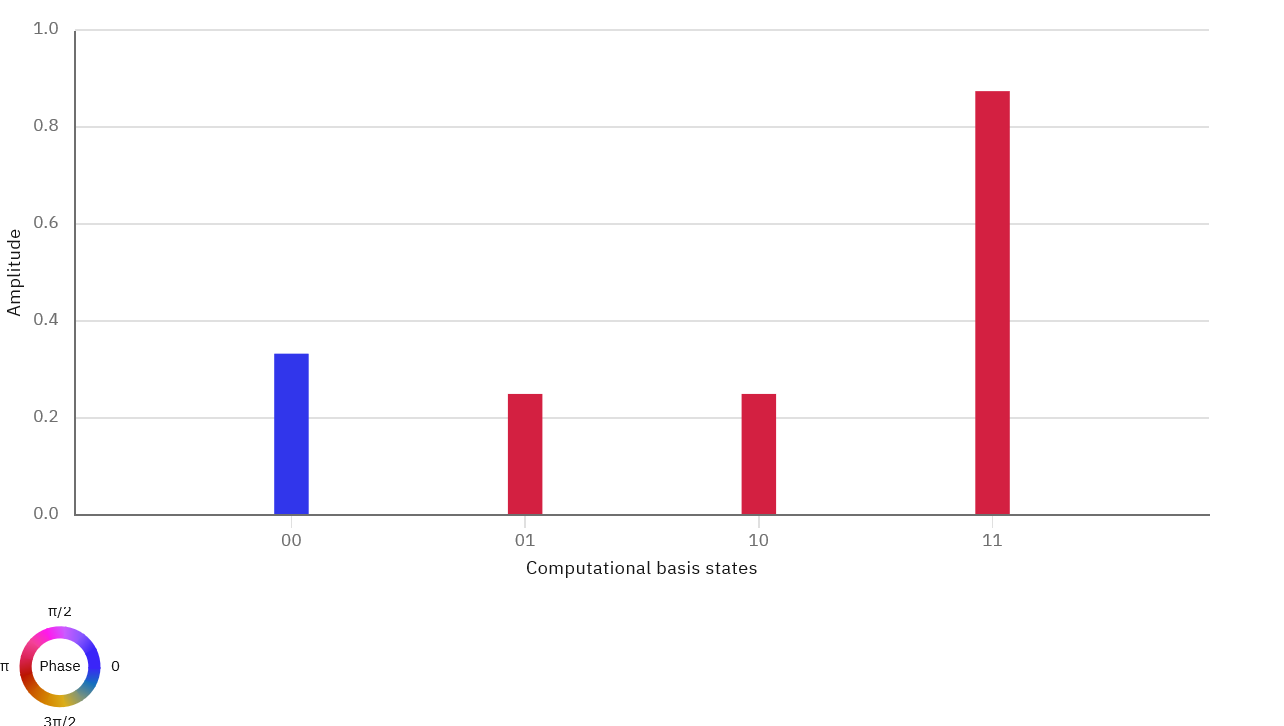"}
			\caption{The circuit of Figure \ref{fig: B3piBy4QC} drives the system to the state described by the above state vector. Now the odds favor player $2$.}
			\label{fig: B3piBy4SV}
		\end{minipage}
	\end{figure}
\end{tcolorbox}

As demonstrated by the above Figures \ref{fig: B3piBy4QC} and \ref{fig: B3piBy4SV}, if player $2$ chooses the proper measurement basis, despite the rotation used by player $1$, he can turn the tables and make sure that the odds of winning are now in his favor.

\section{The formal presentation of the QTG} \label{sec: The Formal Presentation of the QTG}

This section contains the precise description of the QTG, together with its comprehensive mathematical analysis. In our vision of the QTG, rotations play a decisive role, and, since rotations form groups, and group are a prevalent theme in quantum mechanics, we shall employ some elementary concepts from the theory of groups, adhering to the notation and definitions found in standard textbooks such as \cite{Gallian2021, Artin2011} and \cite{Dummit2004}. So, without further ado, we proceed to introduce the required machinery.

\begin{enumerate} [ left = 0.50 cm, labelsep = 1.00 cm ]
	\renewcommand\labelenumi{(\textbf{N}$_\theenumi$)}
	\item	The general cyclic rotation group of order $n$, denoted by $\langle r \rangle$, contains the following $n$ elements
			\begin{align} \label{def: General Cyclic Rotation Group Elements}
				\langle r \rangle
				=
				\{ \mathds{ 1 }, r, r^{ 2 }, \dots, r^{ n - 1 } \}
				\ ,
			\end{align}
	where $r$ is the rotation on the plane by $\frac { 2 \pi } { n }$. Its elements $\mathds{ 1 }, r, r^{ 2 }, \dots, r^{ n - 1 }$ are plane rotations. Specifically, $r^{ k }$ is the rotation by $\frac { 2 \pi k } { n }$, $0 \leq k \leq n - 1$, and $\mathds{ 1 } = r^{ 0 }$ is the identity element.
	\item	To each group element $r^{ k }$, which represents the plane rotation by $\frac { 2 \pi k } { n }$, we associate the rotation gate $R_{ y } ( \frac { 2 \pi k } { n } )$
			\begin{align} \label{eq: Ry Rotation Operator Corresponding to Rotakion r^k}
				R_{ y } \left( \frac { 2 \pi k } { n } \right)
				=
				\begin{bmatrix}
					\cos \frac { \pi k } { n } & - \sin \frac { \pi k } { n }
					\\
					\sin \frac { \pi k } { n } & \phantom{-} \cos \frac { \pi k } { n }
				\end{bmatrix}
				\ ,
			\end{align}
	the two orthonormal kets

			\begin{minipage}[b]{0.40 \textwidth}
				\begin{align} \label{eq: Rotation of Ket 0 By k Pi}
					\ket{ 0 }_{ \frac { 2 \pi k } { n } }
					=
					\begin{bmatrix}
						\cos \frac { \pi k } { n }
						\\
						\sin \frac { \pi k } { n }
					\end{bmatrix}
				\end{align}
			\end{minipage} 
			\hfill
			\begin{minipage}[b]{0.40 \textwidth}
				\begin{align} \label{eq: Rotation of Ket 1 By k Pi}
					\ket{ 1 }_{ \frac { 2 \pi k } { n } }
					=
					\begin{bmatrix}[r]
						- \sin \frac{ \pi k } { n }
						\\
						\phantom{-} \cos \frac { \pi k } { n }
					\end{bmatrix}
					\ ,
				\end{align}
			\end{minipage}

	and the change of basis matrix
			\begin{align} \label{eq: Rotation of Computational Basis By k Pi}
				B_{ \frac { 2 \pi k } { n } }
				=
				\begin{bmatrix}
					\phantom{-} \cos \frac { \pi k } { n } & \sin \frac{ \pi k } { n }
					\\
					- \sin \frac { \pi k } { n } & \cos \frac{ \pi k } { n }
				\end{bmatrix}
				\ ,
			\end{align}
	which affects the change from the computational basis $B = \{ \ket{ 0 }, \ket{ 1 } \}$ to a new basis where the basis kets are $\ket{ 0 }_{ \frac { 2 \pi k } { n } }$ and $\ket{ 1 }_{ \frac { 2 \pi k } { n } }$. More details on the computation of the change of basis matrix can be found in \cite{LeonardS.Woody2022, Williams2011, McMahon2008}.
\end{enumerate}

Before proceeding to formally state the rules of the QTG, let us recall that in all quantum circuits and in all figures of this work, the qubit numbering follows adheres to the IBM Quantum Composer and Qiskit convention, where the rightmost (or top) qubit is the least significant qubit and the leftmost (or bottom) qubit is the most significant qubit.

\begin{enumerate} [ left = 0.50 cm, labelsep = 1.00 cm ]
	\renewcommand\labelenumi{(\textbf{R}$_\theenumi$)}
	\item	The initial state of the two coin system is $\ket{ 0 } \ket{ 1 }$. The winning state for player $1$ (the tosser) is $\ket{ 0 } \ket{ 0 }$, and the winning state for player $2$ (the gambler) is $\ket{ 1 } \ket{ 1 }$.
	\item	Both players agree on a specific cyclic rotation group $\langle r \rangle$ of order $n$, for some sufficiently large $n$. This will be the group upon which the QTG will be based. Both players will pick elements from this group to realize their actions.
	\item	Player $1$ chooses an element $r^{ k } \in \langle r \rangle$, $0 \leq k \leq n - 1$, and uses the rotation operator $R_{ y } ( \frac { 2 \pi k } { n } )$ (recall equation \eqref{eq: Ry Rotation Operator Corresponding to Rotakion r^k}) to entangle the two coins. Formally, player $1$'s action is described by
	\begin{align} \label{eq: General Action of Player 1}
		CNOT \ \cdot \ \left( I \otimes R_{ y } \left( \frac { 2 \pi k } { n } \right) \right)
		\ ,
	\end{align}
	where $CNOT$ and $I$ are the controlled NOT and identity operators, respectively.
	\item	Player $2$ chooses an element $r^{ l } \in \langle r \rangle$, $0 \leq l \leq n - 1$, and measures the two coins with respect to the orthonormal basis $B_{ \frac { 2 \pi l } { n } }$ (recall formula \eqref{eq: Rotation of Computational Basis By k Pi}). Formally, player $2$'s action is described by
	\begin{align} \label{eq: General Action of Player 2}
		R_{ y } \left( \frac { 2 \pi l } { n } \right) \otimes R_{ y } \left( \frac { 2 \pi l } { n } \right)
		\ .
	\end{align}
\end{enumerate}

For concreteness, we have set as $\ket{ 0 } \ket{ 1 }$ the initial state of the game. It is noteworthy to mention that the choice of the initial state does not really matter. Picking a different initial state, would not change the nature of the QTG and the forthcoming analysis.

The QTG is captured by the abstract quantum circuit shown below in Figure \ref{fig: The QC for the QTG}.

\begin{tcolorbox}
	[
		grow to left by = 0.00 cm,
		grow to right by = 0.00 cm,
		colback = MagentaVeryLight!03,			
		enhanced jigsaw,						
		sharp corners,
		toprule = 1.0 pt,
		bottomrule = 1.0 pt,
		leftrule = 0.1 pt,
		rightrule = 0.1 pt,
		sharp corners,
		center title,
		fonttitle = \bfseries
	]
	\begin{figure}[H]
		\centering
		\begin{tikzpicture}[ scale = 0.90 ]
			\begin{yquant}
				nobit AUX_C_0_0;
				nobit AUX_C_0_1;
				qubit { $q_{ 0 }$: \ $\ket{ 1 }$ } Coin_0;
				nobit AUX_C_0_2;
				nobit AUX_C_1_0;
				qubit { $q_{ 1 }$: \ $\ket{ 0 }$ } Coin_1;
				nobit AUX_C_1_1;
				nobit AUX_C_1_2;
				[ name = Ph0, WordBlueVeryLight, line width = 0.50 mm, label = Initial State ]
				barrier ( - ) ;
				hspace {0.250 cm} Coin_0;
				[ draw = RedPurple!50, fill = RedPurple!50, radius = 0.5 cm ] box {\sf{RY}$( \frac { 2 \pi k } { n } )$} Coin_0;
				hspace {0.250 cm} Coin_0;
				cnot Coin_1 | Coin_0;
				hspace {0.250 cm} Coin_0;
				[ name = Ph1, WordBlueVeryLight, line width = 0.50 mm, label = Phase $1$ ]
				barrier ( - ) ;
				hspace {0.50 cm} Coin_0;
				[ draw = GreenLighter2!50, fill = GreenLighter2!50, radius = 0.5 cm ] box {\sf{RY}$( \frac { 2 \pi l } { n } )$} Coin_0;
				hspace {0.250 cm} Coin_0;
				hspace {0.50 cm} Coin_1;
				[ draw = GreenLighter2!50, fill = GreenLighter2!50, radius = 0.5 cm ] box {\sf{RY}$( \frac { 2 \pi l } { n } )$} Coin_1;
				hspace {0.250 cm} Coin_1;
				[ name = Ph2, WordBlueVeryLight, line width = 0.50 mm, label = Phase $2$ ]
				barrier ( - ) ;
				[ draw = white, fill = MagentaDark ] [ radius = 0.5 cm ] measure Coin_0;
				[ draw = white, fill = MagentaDark ] [ radius = 0.5 cm ] measure Coin_1;
				hspace {0.5 cm} Coin_0;
				output {$\ket{ ? }$} Coin_0;
				output {$\ket{ ? }$} Coin_1;
				\node
				[
				alice,
				scale = 1.25,
				anchor = center,
				above right = 1.50 cm and 1.750 cm of Ph0,
				label = { [ label distance = 0.00 cm ] north: { \color{RedPurple}\textbf{Tosser} } }
				]
				( ) { \footnotesize { \color{RedPurple}Entangles the coins} };
				\node
				[
				charlie,
				scale = 1.25,
				anchor = center,
				above right = 1.50 cm and 1.50 cm of Ph1,
				label = { [ label distance = 0.00 cm ] north: { \color{GreenLighter2}\textbf{Gambler} } }
				]
				( ) { \footnotesize { \color{GreenLighter2}Changes the basis} };
				\node [ below = 2.50 cm ] at (Ph0) { $\ket{ \psi_{ 0 } }$ };
				\node [ below = 2.50 cm ] at (Ph1) { $\ket{ \psi_{ 1 } }$ };
				\node [ below = 2.50 cm ] at (Ph2) { $\ket{ \psi_{ 2 } }$ };
			\end{yquant}
		\end{tikzpicture}
		\caption{This figure shows the quantum circuit that corresponds to the QRG.}
		\label{fig: The QC for the QTG}
	\end{figure}
\end{tcolorbox}

\section{The mathematical analysis of the QTG} \label{sec: The Mathematical Analysis of the QTG}

In this section we shall analyze the QTG from a rigorous mathematical viewpoint, in order to draw conclusions about the options of the two players and their odds of winning. We use the abstract quantum circuit of Figure \ref{fig: The QC for the QTG} as a guide. In our exposition, we follow the Qiskit \cite{Qiskit2023} way of ordering the qubits, which means that the bottom qubit is the most significant and the top the least significant.

By rule (\textbf{R}$_{ 1 }$), the initial state of the system, denoted by $\ket{ \psi_{ 0 } }$, is given by

\begin{align} \label{eq: TQG Initial State}
	\ket{ \psi_0 }
	=
	\ket{ 0 }
	\ket{ 1 }
	=
	\begin{bmatrix}
		0
		\\
		1
		\\
		0
		\\
		0
	\end{bmatrix}
	\ .
\end{align}

Player $1$, the tosser, acts first on the system, using the action descried by equation \eqref{eq: General Action of Player 1} in rule (\textbf{R}$_{ 3 }$). In view of formula \eqref{eq: Ry Rotation Operator Corresponding to Rotakion r^k}, the matrix representation of the tensor product $( I \otimes R_{ y } ( \frac { 2 \pi k } { n } ) )$ is

{\small
	\begin{align} \label{eq: I Tensor Product Ry}
		I \otimes R_{ y } \left( \frac { 2 \pi k } { n } \right)
		\overset { \eqref{eq: Ry Rotation Operator Corresponding to Rotakion r^k} } { = }
		\begin{bmatrix}
			1 & 0
			\\
			0 & 1
		\end{bmatrix}
		\otimes
		\begin{bmatrix}
			\cos \frac { \pi k } { n } & - \sin \frac { \pi k } { n } \\
			\sin \frac { \pi k } { n } & \phantom{-} \cos \frac { \pi k } { n }
		\end{bmatrix}
		=
		\begin{bmatrix}
			\cos \frac { \pi k } { n } & - \sin \frac { \pi k } { n } & 0 & 0
			\\
			\sin \frac { \pi k } { n } & \phantom{-} \cos \frac { \pi k } { n } & 0 & 0
			\\
			0 & 0 & \cos \frac { \pi k } { n } & - \sin \frac { \pi k } { n }
			\\
			0 & 0 & \sin \frac { \pi k } { n } & \phantom{-} \cos \frac { \pi k } { n }
		\end{bmatrix}
		\ .
	\end{align}
}

In this work we follow the Qiskit conventions in the numbering of qubits. Therefore, the matrix representation of the $CNOT$, as applied in the circuit of Figure \ref{fig: The QC for the QTG}, is the one given below, according to the Qiskit documentation \cite{QiskitCNOTMatrix2023}.

\begin{align} \label{eq: CNOT Operator}
	CNOT
	=
	\begin{bmatrix}
		1 & 0 & 0 & 0
		\\
		0 & 0 & 0 & 1
		\\
		0 & 0 & 1 & 0
		\\
		0 & 1 & 0 & 0
	\end{bmatrix}
	\
\end{align}

Using \eqref{eq: I Tensor Product Ry} and \eqref{eq: CNOT Operator}, we may derive the matrix representation of the action of player $1$.

{\small
	\begin{align} \label{eq: Matrix Form of the Action of Player 1}
		CNOT \ \cdot \ \left( I \otimes R_{ y } \left( \frac { 2 \pi k } { n } \right) \right)
		&\overset { \eqref{eq: I Tensor Product Ry}, \eqref{eq: CNOT Operator} } { = }
		\begin{bmatrix}
			1 & 0 & 0 & 0
			\\
			0 & 0 & 0 & 1
			\\
			0 & 0 & 1 & 0
			\\
			0 & 1 & 0 & 0
		\end{bmatrix}
		\
		\begin{bmatrix}
			\cos \frac { \pi k } { n } & - \sin \frac { \pi k } { n } & 0 & 0
			\\
			\sin \frac { \pi k } { n } & \phantom{-} \cos \frac { \pi k } { n } & 0 & 0
			\\
			0 & 0 & \cos \frac { \pi k } { n } & - \sin \frac { \pi k } { n }
			\\
			0 & 0 & \sin \frac { \pi k } { n } & \phantom{-} \cos \frac { \pi k } { n }
		\end{bmatrix}
		\nonumber \\
		&=
		\begin{bmatrix}]
			\cos \frac { \pi k } { n } & - \sin \frac { \pi k } { n } & 0 & 0
			\\
			0 & 0 & \sin \frac { \pi k } { n } & \phantom{-} \cos \frac { \pi k } { n }
			\\
			0 & 0 & \cos \frac { \pi k } { n } & - \sin \frac { \pi k } { n }
			\\
			\sin \frac { \pi k } { n } & \phantom{-} \cos \frac { \pi k } { n } & 0 & 0
		\end{bmatrix}
		\
	\end{align}
}

By combining \eqref{eq: TQG Initial State} and \eqref{eq: Matrix Form of the Action of Player 1}, we conclude that the state of the two coins at the end of Phase $1$, i.e., after the tosser's action, which is denoted by $\ket{ \psi_{ 1 } }$, is given by

\begin{align} \label{eq: TQG Phase 1}
	\ket{ \psi_{ 1 } }
	=
	\begin{bmatrix}
		- \sin \frac { \pi k }{ n }
		\\
		0
		\\
		0
		\\
		\phantom{-} \cos \frac { \pi k } { n }
	\end{bmatrix}
	\ .
\end{align}

\begin{example} [Playing in the group $\langle \frac { 2 \pi } { 16 } \rangle$] \label{xmp: Playing in the Group 2 pi By 16 Phase 1}
	This example examines how the QTG evolves up to Phase $1$ (see Figure \ref{fig: The QC for the QTG}), when the game is played in the cyclic rotation group $\langle \frac { 2 \pi } { 16 } \rangle$. Practically, this implies that both players may choose for their action one the $16$ rotations of this group.

	The general form of the action of Player $1$ is given by \eqref{eq: Matrix Form of the Action of Player 1} and the state of the two coins by \eqref{eq: TQG Phase 1}. If the game were to end at this point in time, measuring the two coins would find the system in state $\ket{ 0 } \ket{ 0 }$ with probability $\sin^{ 2 } \frac { \pi k }{ n }$, and in state $\ket{ 1 } \ket{ 1 }$ with probability $\cos^{ 2 } \frac { \pi k }{ n }$. Hence, the probability that the tosser would win ranges from $0$, in case the tosser uses the identity element of the group, namely the $0$ radians rotation, to $1$, in case he uses the $\frac { \pi }{ 2 }$ radians rotation. For every possible choice of rotation from the $\langle \frac { 2 \pi } { 16 } \rangle$ group, the corresponding probability for player $1$ is given by the red line of Figure \ref{fig: Phase 1 Probabilities}. Symmetrically, for every possible choice of rotation from $\langle \frac { 2 \pi } { 16 } \rangle$, the corresponding probability for player $2$ is given by the green line of Figure \ref{fig: Phase 1 Probabilities}. The specific numerical values of the players' probabilities are contained in Table \ref{tbl: Phase 1 Probabilities}.
	\hfill $\triangleleft$
\end{example}

\begin{tcolorbox}
	[
		grow to left by = 1.00 cm,
		grow to right by = 1.00 cm,
		colback = MagentaVeryLight!03,			
		enhanced jigsaw,						
		sharp corners,
		toprule = 1.0 pt,
		bottomrule = 1.0 pt,
		leftrule = 0.1 pt,
		rightrule = 0.1 pt,
		sharp corners,
		center title,
		fonttitle = \bfseries
	]
	\begin{figure}[H]
		\centering
		\pgfplotsset { compat = 1.16 }
		\begin{tikzpicture} [ scale = 1.00 ]
			\begin{axis}
				[
					width = 7.00 cm, height = 5.50 cm,
					axis line style = { WordBlueDarker, line width = 0.25 mm },
					axis x line = center,
					axis y line = middle,
					axis line style = { ->, > = { latex [ length = 7pt ] } },
					xmin = - 0.1, xmax = 3.40,
					ymin = - 0.1, ymax = 1.1,
					xtick =
					{
						pi * 1 / 16, pi * 2 / 16, pi * 3 / 16, pi * 4 / 16,
						pi * 5 / 16, pi * 6 / 16, pi * 7 / 16, pi * 8 / 16,
						pi * 9 / 16, pi * 10 / 16, pi * 11 / 16, pi * 12 / 16,
						pi * 13 / 16, pi * 14 / 16, pi * 15 / 16, pi * 16 / 16
					},
					xticklabels =
					{
						$\frac { \pi } { 16 }$, $\frac { 2 \pi } { 16 }$,
						$\frac { 3 \pi } { 16 }$, $\frac { \pi } { 4 }$,
						$\frac { 5 \pi } { 16 }$, $\frac { 6 \pi } { 16 }$,
						$\frac { 7 \pi } { 16 }$, $\frac { \pi } { 2 }$,
						$\frac { 9 \pi } { 16 }$, $\frac { 10 \pi } { 16 }$,
						$\frac { 11 \pi } { 16 }$, $\frac { 3 \pi } { 4 }$,
						$\frac { 13 \pi } { 16 }$, $\frac { 14 \pi } { 16 }$,
						$\frac { 15 \pi } { 16 }$, $\pi$
					},
					ytick = { 0, 0.125, 0.25, 0.375, 0.50, 0.625, 0.75, 0.875, 1.00 },
					yticklabels = { 0, 0.125, 0.250, 0.375, 0.500, 0.625, 0.750, 0.875, 1 },
					x label style =
					{
						at = { ( axis description cs: 0.5, -0.075 ) }, fill = WordVeryLightTeal, anchor = north
					},
					xlabel = { \textbf{Rotation angle} $\frac { \pi k } { 16 } \ (0 \leq k \leq 15)$ },
					y label style =
					{
						at = { ( axis description cs: 0.025, 1.040 ) }, fill = WordVeryLightTeal, anchor = south
					},
					ylabel = { \textbf{Probabilities} },
					x post scale = 2.0,
					y post scale = 1.5,
					grid = minor,
					trig format plots = rad,
					domain = 0 * pi : pi,
					samples = 100,
					declare function = { t(\x) = sin( \x )^2; g(\x) = cos( \x )^2; },
					legend style = { at = { ( 0.5, 1.2 ) }, anchor = south }
				]
				\addlegendimage { empty legend }
				\addlegendentry { \textbf{Phase} $1$ in $\langle \frac { 2 \pi } { 16 } \rangle$ }
				\addplot [ RedPurple, line width = 0.4 mm ] { t(\x) };
				\node [ pin = 30 : { \color{RedPurple} \textbf{Tosser} } ] at ( 10 * pi / 16, 0.855 ) {};
				\addlegendentry { Probability Player $1$ wins }
				\addplot [ GreenLighter2, line width = 0.4 mm ] { g ( \x ) };
				\node [ pin = 30 : { \color{GreenLighter2} \textbf{Gambler} } ] at ( pi / 8, 0.855 ) {};
				\addlegendentry { Probability Player $2$ wins }
				\addplot [ color = RedPurple, draw = none, mark size = 3 pt, mark = * ] coordinates
				{
					( 0, { t ( 0 ) } )
					( pi * 1 / 16, { t ( pi * 1 / 16 ) } )
					( pi * 2 / 16, { t ( pi * 2 / 16 ) } )
					( pi * 3 / 16, { t ( pi * 3 / 16 ) } )
					( pi * 4 / 16, { t ( pi * 4 / 16 ) } )
					( pi * 5 / 16, { t ( pi * 5 / 16 ) } )
					( pi * 6 / 16, { t ( pi * 6 / 16 ) } )
					( pi * 7 / 16, { t ( pi * 7 / 16 ) } )
					( pi * 8 / 16, { t ( pi * 8 / 16 ) } )
					( pi * 9 / 16, { t ( pi * 9 / 16 ) } )
					( pi * 10 / 16, { t ( pi * 10 / 16 ) } )
					( pi * 11 / 16, { t ( pi * 11 / 16 ) } )
					( pi * 12 / 16, { t ( pi * 12 / 16 ) } )
					( pi * 13 / 16, { t ( pi * 13 / 16 ) } )
					( pi * 14 / 16, { t ( pi * 14 / 16 ) } )
					( pi * 15 / 16, { t ( pi * 15 / 16 ) } )
					( pi * 16 / 16, { t ( pi * 16 / 16 ) } )
				};
				\addplot [ color = GreenLighter2, draw = none, mark size = 3 pt, mark = square* ] coordinates
				{
					( 0, { g ( 0 ) } )
					( pi * 1 / 16, { g ( pi * 1 / 16 ) } )
					( pi * 2 / 16, { g ( pi * 2 / 16 ) } )
					( pi * 3 / 16, { g ( pi * 3 / 16 ) } )
					( pi * 4 / 16, { g ( pi * 4 / 16 ) } )
					( pi * 5 / 16, { g ( pi * 5 / 16 ) } )
					( pi * 6 / 16, { g ( pi * 6 / 16 ) } )
					( pi * 7 / 16, { g ( pi * 7 / 16 ) } )
					( pi * 8 / 16, { g ( pi * 8 / 16 ) } )
					( pi * 9 / 16, { g ( pi * 9 / 16 ) } )
					( pi * 10 / 16, { g ( pi * 10 / 16 ) } )
					( pi * 11 / 16, { g ( pi * 11 / 16 ) } )
					( pi * 12 / 16, { g ( pi * 12 / 16 ) } )
					( pi * 13 / 16, { g ( pi * 13 / 16 ) } )
					( pi * 14 / 16, { g ( pi * 14 / 16 ) } )
					( pi * 15 / 16, { g ( pi * 15 / 16 ) } )
					( pi * 16 / 16, { g ( pi * 16 / 16 ) } )
				};
			\end{axis}
		\end{tikzpicture}
		\caption{This figure shows both players probabilities to win at the end of Phase $1$ when the QTG is played in $\langle \frac { 2 \pi } { 16 } \rangle$. The red and green lines depict the tosser and the gambler's, respectively, probabilities to win.}
		\label{fig: Phase 1 Probabilities}
	\end{figure}
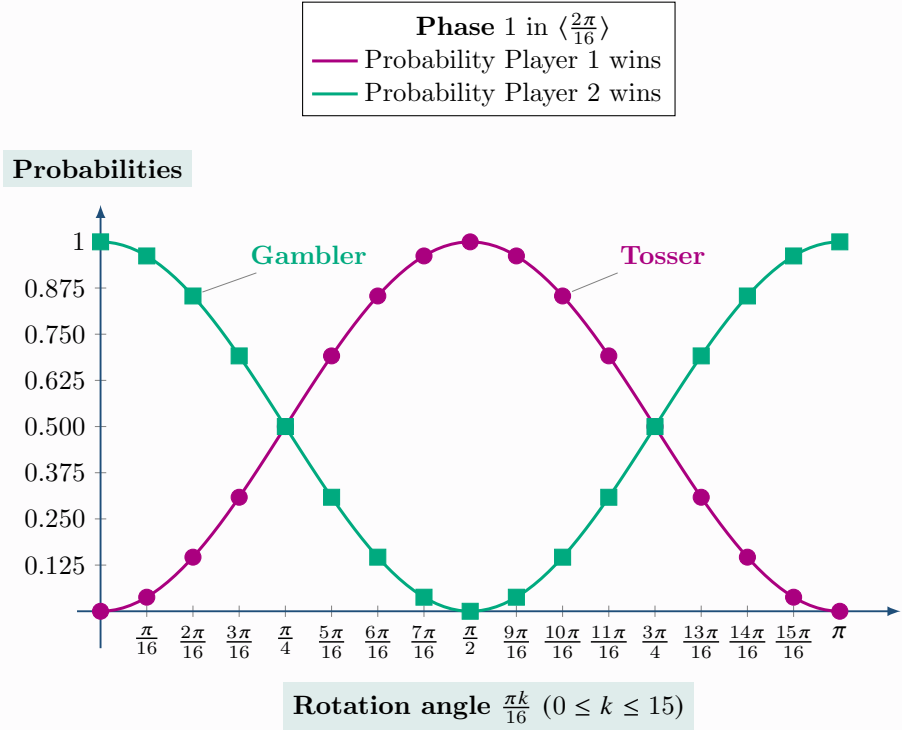
\end{tcolorbox}

\begin{tcolorbox}
	[
		grow to left by = 1.00 cm,
		grow to right by = 1.00 cm,
		colback = WordIceBlue!25,	%
		enhanced jigsaw,			
		sharp corners,
		boxrule = 0.1 pt,
		toprule = 0.1 pt,
		bottomrule = 0.1 pt
	]
	\begin{table}[H]
		\renewcommand{\arraystretch}{1.60}
		\caption{This table gives the probabilities for each player at the end of Phase $1$ when QTG is played in $\langle \frac { 2 \pi } { 16 } \rangle$.}
		\label{tbl: Phase 1 Probabilities}
		\centering
		{\small
		\begin{tabular}
			{
				>{\centering\arraybackslash} m{1.50 cm} !{\vrule width 0.25 pt}
				>{\centering\arraybackslash} m{0.85 cm} !{\vrule width 0.25 pt}
				>{\centering\arraybackslash} m{0.85 cm} !{\vrule width 0.25 pt}
				>{\centering\arraybackslash} m{0.85 cm} !{\vrule width 0.25 pt}
				>{\centering\arraybackslash} m{0.85 cm} !{\vrule width 0.25 pt}
				>{\centering\arraybackslash} m{0.85 cm} !{\vrule width 0.25 pt}
				>{\centering\arraybackslash} m{0.85 cm} !{\vrule width 0.25 pt}
				>{\centering\arraybackslash} m{0.85 cm} !{\vrule width 0.25 pt}
				>{\centering\arraybackslash} m{0.85 cm} !{\vrule width 0.25 pt}
				>{\centering\arraybackslash} m{0.85 cm}
			}
			\Xhline{4\arrayrulewidth}
			\multicolumn{10}{c}
			{ \cellcolor[HTML]{000000} { \color[HTML]{FFFFFF} Probabilities at the end of Phase $1$ in $\langle \frac { 2 \pi } { 16 } \rangle$ } }
			\\
			\Xhline{\arrayrulewidth}
			{ \cellcolor[HTML]{000000} { \color[HTML]{FFFFFF} $k$ } }
			&
			{ \cellcolor[HTML]{000000} { \color[HTML]{FFFFFF} $0$ } }
			&
			{ \cellcolor[HTML]{000000} { \color[HTML]{FFFFFF} $1, 15$ } }
			&
			{ \cellcolor[HTML]{000000} { \color[HTML]{FFFFFF} $2, 14$ } }
			&
			{ \cellcolor[HTML]{000000} { \color[HTML]{FFFFFF} $3, 13$ } }
			&
			{ \cellcolor[HTML]{000000} { \color[HTML]{FFFFFF} $4, 12$ } }
			&
			{ \cellcolor[HTML]{000000} { \color[HTML]{FFFFFF} $5, 11$ } }
			&
			{ \cellcolor[HTML]{000000} { \color[HTML]{FFFFFF} $6, 10$ } }
			&
			{ \cellcolor[HTML]{000000} { \color[HTML]{FFFFFF} $7, 9$ } }
			&
			{ \cellcolor[HTML]{000000} { \color[HTML]{FFFFFF} $8$ } }
			\\
			\Xhline{3\arrayrulewidth}
			{ \cellcolor[HTML]{000000} { \color[HTML]{FFFFFF} Tosser } }
			&
			$0$
			&
			$0.038$
			&
			$0.146$
			&
			$0.309$
			&
			$0.5$
			&
			$0.691$
			&
			$0.854$
			&
			$0.962$
			&
			$1$
			\\
			\Xhline{\arrayrulewidth}
			{ \cellcolor[HTML]{000000} { \color[HTML]{FFFFFF} Gambler } }
			&
			$1$
			&
			$0.962$
			&
			$0.854$
			&
			$0.691$
			&
			$0.5$
			&
			$0.309$
			&
			$0.146$
			&
			$0.038$
			&
			$0$
			\\
			\Xhline{4\arrayrulewidth}
		\end{tabular}
		}
		\renewcommand{\arraystretch}{1.0}
	\end{table}
\end{tcolorbox}

Continuing with our analysis, let us now proceed to Phase $2$ (see Figure \ref{fig: The QC for the QTG}). During Phase $2$, player $2$, the gambler, acts by changing the measurement basis. This action is expressed by the tensor product $R_{ y } ( \frac { 2 \pi l } { n } ) \otimes R_{ y } ( \frac { 2 \pi l } { n } )$ of equation \eqref{eq: Rotation of Computational Basis By k Pi}, which is given below in explicit form.

{\small
	\begin{align} \label{eq: Ry Tensor Product Ry}
		R_{ y } \left( \frac { 2 \pi l } { n } \right) \otimes R_{ y } \left( \frac { 2 \pi l } { n } \right)
		&\overset { \eqref{eq: General Action of Player 2} } { = }
		\begin{bmatrix}
			\cos \frac { \pi l } { n } & - \sin \frac { \pi l } { n } \\
			\sin \frac { \pi l } { n } & \cos \frac { \pi l } { n }
		\end{bmatrix}
		\otimes
		\begin{bmatrix}
			\cos \frac { \pi l } { n } & - \sin \frac { \pi l } { n } \\
			\sin \frac { \pi l } { n } & \cos \frac { \pi l } { n }
		\end{bmatrix}
		\nonumber \\
		&=
		\begin{bmatrix}
			\cos \frac { \pi l } { n } \cos \frac { \pi l } { n } &
			- \cos \frac { \pi l } { n } \sin \frac { \pi l } { n } &
			- \sin \frac{ \pi l } { n } \cos \frac { \pi l } { n } &
			\phantom{-} \sin \frac { \pi l } { n } \sin \frac { \pi l } { n }
			\\
			\cos \frac { \pi l } { n } \sin \frac { \pi l } { n } &
			\phantom{-} \cos \frac { \pi l } { n } \cos \frac { \pi l } { n } &
			- \sin \frac{ \pi l }{ n } \sin \frac { \pi l } { n } &
			- \sin \frac { \pi l } { n } \cos \frac { \pi l } { n }
			\\
			\sin \frac { \pi l } { n } \cos \frac { \pi l } { n } &
			- \sin \frac { \pi l } { n } \sin \frac{ \pi l } { n } &
			\phantom{-} \cos \frac { \pi l } { n } \cos \frac { \pi l } { n } &
			- \cos \frac { \pi l } { n } \sin \frac { \pi l } { n }
			\\
			\sin \frac { \pi l } { n } \sin \frac { \pi l } { n } &
			\phantom{-} \sin \frac { \pi l } { n } \cos \frac { \pi l } { n } &
			\phantom{-} \cos \frac { \pi l } { n } \sin \frac { \pi l } { n } &
			\phantom{-} \cos \frac { \pi l } { n } \cos \frac { \pi l } { n }
		\end{bmatrix}
		\ .
	\end{align}
}

Therefore, as a result of the gambler's action, the system of the two coins at the end of Phase $2$ will be in the following $\ket{ \psi_{ 2 } }$ state.

{\small
	\begin{align} \label{eq: TQG Phase 2}
		\ket{ \psi_{ 2 } }
		&\overset { \eqref{eq: TQG Phase 1}, \eqref{eq: Ry Tensor Product Ry} } { = }
		\begin{bmatrix}
			\cos \frac { \pi l } { n } \cos \frac { \pi l } { n } &
			- \cos \frac { \pi l } { n } \sin \frac { \pi l } { n } &
			- \sin \frac{ \pi l } { n } \cos \frac { \pi l } { n } &
			\phantom{-} \sin \frac { \pi l } { n } \sin \frac { \pi l } { n }
			\\
			\cos \frac { \pi l } { n } \sin \frac { \pi l } { n } &
			\phantom{-} \cos \frac { \pi l } { n } \cos \frac { \pi l } { n } &
			- \sin \frac{ \pi l }{ n } \sin \frac { \pi l } { n } &
			- \sin \frac { \pi l } { n } \cos \frac { \pi l } { n }
			\\
			\sin \frac { \pi l } { n } \cos \frac { \pi l } { n } &
			- \sin \frac { \pi l } { n } \sin \frac{ \pi l } { n } &
			\phantom{-} \cos \frac { \pi l } { n } \cos \frac { \pi l } { n } &
			- \cos \frac { \pi l } { n } \sin \frac { \pi l } { n }
			\\
			\sin \frac { \pi l } { n } \sin \frac { \pi l } { n } &
			\phantom{-} \sin \frac { \pi l } { n } \cos \frac { \pi l } { n } &
			\phantom{-} \cos \frac { \pi l } { n } \sin \frac { \pi l } { n } &
			\phantom{-} \cos \frac { \pi l } { n } \cos \frac { \pi l } { n }
		\end{bmatrix}
		\
		\begin{bmatrix}
			- \sin \frac { \pi k } { n }
			\\
			0
			\\
			0
			\\
			\cos \frac { \pi k } { n }
		\end{bmatrix}
		\nonumber \\
		&=
		\begin{bmatrix}
			- \sin \frac { \pi k } { n } \cos^{ 2 } \frac { \pi l } { n } +
			\cos \frac { \pi k } { n } \sin^{ 2 } \frac { \pi l } { n }
			\\
			- \sin \frac { \pi k } { n } \cos \frac { \pi l } { n } \sin \frac { \pi l } { n } -
			\cos \frac { \pi k } { n } \cos \frac { \pi l } { n } \sin \frac { \pi l } { n }
			\\
			- \sin \frac { \pi k } { n } \cos \frac { \pi l } { n } \sin \frac { \pi l } { n } -
			\cos \frac { \pi k } { n } \cos \frac { \pi l } { n } \sin \frac { \pi l } { n }
			\\
			- \sin \frac { \pi k } { n } \sin^{ 2 } \frac { \pi l } { n } +
			\cos \frac { \pi k } { n } \cos^{ 2 } \frac { \pi l } { n }
		\end{bmatrix}
		\ .
	\end{align}
}

After the completion of Phase $2$, the measurement takes place. As formula \eqref{eq: TQG Phase 2} implies, the probability $P_{ T }$ of player $1$ winning, that is of observing the system in state $\ket{ 0 } \ket{ 0 }$, is given by the next equation

\begin{align} \label{eq: Probability Tosser Wins}
	P_{ T } ( k, l )
	=
	\left(
	\cos \frac { \pi k } { n } \sin^{ 2 } \frac { \pi l } { n } -
	\sin \frac { \pi k } { n } \cos^{ 2 } \frac { \pi l } { n }
	\right)^{ 2 }
	\ .
\end{align}

In a symmetric way, the probability $P_{ G }$ of player $2$ winning, i.e, finding the system in state $\ket{ 1 } \ket{ 1 }$ is

\begin{align} \label{eq: Probability Gambler Wins}
	P_{ G } ( k, l )
	=
	\left(
	\cos \frac { \pi k } { n } \cos^{ 2 } \frac { \pi l } { n } -
	\sin \frac { \pi k } { n } \sin^{ 2 } \frac { \pi l } { n }
	\right)^{ 2 }
	\ .
\end{align}

Another important remark is that the probability $P_{ \text{draw} }$ that neither player wins, namely that the system ends up in the state $\ket{ 0 } \ket{ 1 }$ or $\ket{ 1 } \ket{ 0 }$, is expressed by

\begin{align} \label{eq: Draw Probability}
	P_{ \text{draw} } ( k, l )
	=
	2 \
	\left(
	\cos \frac { \pi k } { n } + \sin \frac { \pi k } { n }
	\right)^{ 2 }
	\cos^{ 2 } \frac { \pi l } { n } \sin^{ 2 } \frac { \pi l } { n }
	\ ,
\end{align}

and is, in general, nonzero.

%

A close examination of Equations \eqref{eq: Probability Tosser Wins} and \eqref{eq: Probability Gambler Wins} reveals the characteristic property of the QTG.
Irrespective of the tosser's choice of rotation, the subsequent action of the gambler balances the situation and puts both players on a equal footing. For any given $k$, $0 \leq k \leq n - 1$, chosen by the tosser, the gambler picks some orthonomal basis $B_{ \frac { 2 \pi l } { n } }$, $0 \leq l \leq n - 1$, which leads to the probabilities $P_{ T } ( k, l )$ and $P_{ G } ( k, l )$ that express the chances of the former and the latter, respectively, to win the game The crucial observation is that for every basis $B_{ \frac { 2 \pi l } { n } }$, there exists a \emph{dual} basis $B_{ \frac { 2 \pi l^{ \ast } } { n } }$ that exchanges the aforementioned probabilities, so that $P_{ T } ( k, l^{ \ast } ) = P_{ G } ( k, l )$ and $P_{ G } ( k, l^{ \ast } ) = P_{ T } ( k, l )$. For instance, say that for some basis $B_{ \frac { 2 \pi l } { n } }$, $P_{ T } ( k, l ) > P_{ G } ( k, l )$, i.e., player $1$ is more likely to win than player $2$. Then for the dual basis $B_{ \frac { 2 \pi l^{ \ast } } { n } }$, $P_{ G } ( k, l^{ \ast } ) = P_{ T } ( k, l ) > P_{ G } ( k, l ) = P_{ T } ( k, l^{ \ast } )$, which shows that the odds are perfectly reserved and player $2$ is now more likely to win than player $1$ by exactly the same amount. This fact is formally proved in the next Theorem \ref{thr: The QTG is Fair}.

\begin{theorem} \label{thr: The QTG is Fair}
	The characteristic property of the QTG is that for each $k$, $0 \leq k \leq n - 1$, to every orthonomal basis $B_{ \frac { 2 \pi l } { n } }$, $0 \leq l \leq n - 1$, corresponds a \emph{dual} basis $B_{ \frac { 2 \pi l^{ \ast } } { n } }$ such that:
	\begin{align} \label{eq: The Characteristic Property of the QTG}
		P_{ T } ( k, l^{ \ast } ) &= P_{ G } ( k, l )
		\\
		&\text{and}
		\nonumber \\
		P_{ G } ( k, l^{ \ast } ) &= P_{ T } ( k, l )
		\ .
	\end{align}
\end{theorem}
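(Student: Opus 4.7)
The plan is to identify the duality between bases explicitly and then verify the two probability identities by direct substitution. Inspecting formulas \eqref{eq: Probability Tosser Wins} and \eqref{eq: Probability Gambler Wins}, I observe that $P_T(k,l)$ and $P_G(k,l)$ are structurally identical: they differ only in which of the squared trigonometric factors $\cos^2 \frac{\pi l}{n}$ and $\sin^2 \frac{\pi l}{n}$ multiplies $\cos \frac{\pi k}{n}$ versus $\sin \frac{\pi k}{n}$. Hence the map $l \mapsto l^{\ast}$ will produce the required duality precisely when it swaps these two quantities, i.e.\ when $\cos^2 \frac{\pi l^{\ast}}{n} = \sin^2 \frac{\pi l}{n}$ and $\sin^2 \frac{\pi l^{\ast}}{n} = \cos^2 \frac{\pi l}{n}$.

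Motivated by this, the natural ansatz is $l^{\ast} \equiv l + \frac{n}{2} \pmod{n}$, which is a legitimate index in $\{0, 1, \ldots, n-1\}$ whenever $n$ is even (the case implicitly used throughout the paper, and the one appearing in the running example with $n = 16$). Applying the elementary identities $\cos(\theta + \frac{\pi}{2}) = -\sin \theta$ and $\sin(\theta + \frac{\pi}{2}) = \cos \theta$ with $\theta = \frac{\pi l}{n}$, and then squaring away the signs, yields the two required equalities at once. Substituting into \eqref{eq: Probability Tosser Wins} gives
\[
P_T(k, l^{\ast}) = \left(\cos \tfrac{\pi k}{n} \cos^2 \tfrac{\pi l}{n} - \sin \tfrac{\pi k}{n} \sin^2 \tfrac{\pi l}{n}\right)^{2} = P_G(k, l),
\]
and the symmetric substitution in \eqref{eq: Probability Gambler Wins} yields $P_G(k, l^{\ast}) = P_T(k, l)$, which is exactly the claim.

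I do not anticipate any deep conceptual obstacle; the argument is essentially a one-line trigonometric verification once the right ansatz for $l^{\ast}$ is identified. The only mild subtlety is the parity assumption on $n$: the shift by $\frac{n}{2}$ lands outside the integers when $n$ is odd, so strictly speaking one either assumes $n$ even (consistent with the paper's setup and numerical examples) or provides an alternative surrogate dual index. The key insight that makes the proof work is recognizing that the two probability formulas differ only by a swap of $\cos^2 \frac{\pi l}{n}$ and $\sin^2 \frac{\pi l}{n}$, a swap that is realized group-theoretically by composing the chosen rotation with the rotation by $\pi$ in the Bloch picture, i.e.\ by multiplication by the element $r^{n/2}$ of $\langle r \rangle$. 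This is the concrete incarnation of the group-theoretic symmetry emphasized in the paragraph preceding the theorem.
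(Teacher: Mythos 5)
Your proposal is correct and follows essentially the same route as the paper: the same dual index $l^{\ast} \equiv l + \frac{n}{2} \pmod{n}$ and the same half-turn trigonometric identities, with the only difference being that you skip the paper's explicit two-case analysis of the modular reduction (harmless, since squaring removes the sign either way). Your remark that $n$ must be even for $l^{\ast}$ to be a valid group index is a genuine point that the paper leaves implicit and is worth keeping.
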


\begin{proof}
	The proof is quite straightforward. First, let us recall some well-known trigonometric identities (see any standard textbook, such as \cite{Beecher2016}).
	\begin{align}
		\cos ( \theta \pm \frac { \pi } { 2 } ) &= \mp \sin \theta
		&
		\sin ( \theta \pm \frac { \pi } { 2 } ) &= \pm \cos \theta
		\tag{$\star$}
	\end{align}
	Let us fix a specific $k$, $0 \leq k \leq n - 1$, representing the plane rotation $r^{ k }$ by $\frac { 2 \pi k } { n }$ radians, which is employed by the tosser during Phase $1$. Let us further assume that during Phase $2$ the gambler picks the orthonomal basis $B_{ \frac { 2 \pi l } { n } }$ for some $l$, $0 \leq l \leq n - 1$. According to formulas \eqref{eq: Probability Tosser Wins} and \eqref{eq: Probability Gambler Wins}, the corresponding probabilities that player $1$ and $2$ wins are
	\begin{align}
		P_{ T } ( k, l )
		&=
		\left(
		\cos \frac { \pi k } { n } \sin^{ 2 } \frac { \pi l } { n } -
		\sin \frac { \pi k } { n } \cos^{ 2 } \frac { \pi l } { n }
		\right)^{ 2 }
		\tag{\ref{eq: Probability Tosser Wins}}
		\\
		&\text{and}
		\nonumber \\
		P_{ G } ( k, l )
		&=
		\left(
		\cos \frac { \pi k } { n } \cos^{ 2 } \frac { \pi l } { n } -
		\sin \frac { \pi k } { n } \sin^{ 2 } \frac { \pi l } { n }
		\right)^{ 2 }
		\ .
		\tag{\ref{eq: Probability Gambler Wins}}
	\end{align}
	Given $l$, we define its \emph{dual}, denoted by $l^{ \ast }$, as follows:
	\begin{align} \label{eq:Dual l^* D}
		l^{ \ast }
		\equiv
		\left( l + \frac { n } { 2 } \right) \bmod n
		\ .
	\end{align}
	The above compact definition, can be cast in the more expedient form shown below.
	\begin{align} \label{eq: Explicit Dual l^*}
		l^{ \ast }
		=
		\left\{
		\begin{matrix}
			l + \frac { n } { 2 } & \text{ if } l + \frac { n } { 2 } < n \\
			l - \frac { n } { 2 } & \text{ if } l + \frac { n } { 2 } \geq n
		\end{matrix}
		\right.
		\ , \ 0 \leq l \leq n - 1 \ .
	\end{align}
	We now proceed with the proof by distinguishing the following cases.
	\begin{itemize}
		\item	$l + \frac { n } { 2 } < n$. In this case, according to \eqref{eq: Explicit Dual l^*}, $l^{ \ast } = l + \frac { n } { 2 }$. Therefore,
				\begin{align}
					P_{ T } ( k, l^{ \ast } )
					&\overset{\eqref{eq: Probability Tosser Wins}}{=}
					\left(
					\cos \frac { \pi k } { n } \sin^{ 2 } \frac { \pi l^{ \ast } } { n } -
					\sin \frac { \pi k } { n } \cos^{ 2 } \frac { \pi l^{ \ast } } { n }
					\right)^{ 2 }
					\nonumber
					\\
					&=
					\left(
					\cos \frac { \pi k } { n } \sin^{ 2 } \left( \frac { 2 \pi l + \pi n} { 2 n } \right) -
					\sin \frac { \pi k } { n } \cos^{ 2 } \left( \frac { 2 \pi l + \pi n} { 2 n } \right)
					\right)^{ 2 }
					\nonumber
					\\
					&=
					\left(
					\cos \frac { \pi k } { n } \sin^{ 2 } \left( \frac { \pi l } { n } + \frac { \pi } { 2 } \right) -
					\sin \frac { \pi k } { n } \cos^{ 2 } \left( \frac { \pi l } { n } + \frac { \pi } { 2 } \right)
					\right)^{ 2 }
					\nonumber
					\\
					&=
					\left(
					\cos \frac { \pi k } { n } \cos^{ 2 } \frac { \pi l } { n } -
					\sin \frac { \pi k } { n } \sin^{ 2 } \frac { \pi l } { n }
					\right)^{ 2 }
					\nonumber
					\\
					&\overset{\eqref{eq: Probability Gambler Wins}}{=}
					P_{ G } ( k, l )
					\\
					&\text{and}
					\nonumber
					\\
					P_{ G } ( k, l^{ \ast } )
					&\overset{\eqref{eq: Probability Gambler Wins}}{=}
					\left(
					\cos \frac { \pi k } { n } \cos^{ 2 } \frac { \pi l^{ \ast } } { n } -
					\sin \frac { \pi k } { n } \sin^{ 2 } \frac { \pi l^{ \ast } } { n }
					\right)^{ 2 }
					\nonumber
					\\
					&=
					\left(
					\cos \frac { \pi k } { n } \cos^{ 2 } \left( \frac { 2 \pi l + \pi n} { 2 n } \right) -
					\sin \frac { \pi k } { n } \sin^{ 2 } \left( \frac { 2 \pi l + \pi n} { 2 n } \right)
					\right)^{ 2 }
					\nonumber
					\\
					&=
					\left(
					\cos \frac { \pi k } { n } \cos^{ 2 } \left( \frac { \pi l } { n } + \frac { \pi } { 2 } \right) -
					\sin \frac { \pi k } { n } \sin^{ 2 } \left( \frac { \pi l } { n } + \frac { \pi } { 2 } \right)
					\right)^{ 2 }
					\nonumber
					\\
					&=
					\left(
					\cos \frac { \pi k } { n } \sin^{ 2 } \frac { \pi l } { n } -
					\sin \frac { \pi k } { n } \cos^{ 2 } \frac { \pi l } { n }
					\right)^{ 2 }
					\nonumber
					\\
					&\overset{\eqref{eq: Probability Tosser Wins}}{=}
					P_{ T } ( k, l )
					\ .
				\end{align}
		\item	$l + \frac { n } { 2 } \geq n$. In this case, according to \eqref{eq: Explicit Dual l^*}, $l^{ \ast } = l - \frac { n } { 2 }$. Therefore,
				\begin{align}
					P_{ T } ( k, l^{ \ast } )
					&\overset{\eqref{eq: Probability Tosser Wins}}{=}
					\left(
					\cos \frac { \pi k } { n } \sin^{ 2 } \frac { \pi l^{ \ast } } { n } -
					\sin \frac { \pi k } { n } \cos^{ 2 } \frac { \pi l^{ \ast } } { n }
					\right)^{ 2 }
					\nonumber
					\\
					&=
					\left(
					\cos \frac { \pi k } { n } \sin^{ 2 } \left( \frac { 2 \pi l - \pi n} { 2 n } \right) -
					\sin \frac { \pi k } { n } \cos^{ 2 } \left( \frac { 2 \pi l - \pi n} { 2 n } \right)
					\right)^{ 2 }
					\nonumber
					\\
					&=
					\left(
					\cos \frac { \pi k } { n } \sin^{ 2 } \left( \frac { \pi l } { n } - \frac { \pi } { 2 } \right) -
					\sin \frac { \pi k } { n } \cos^{ 2 } \left( \frac { \pi l } { n } - \frac { \pi } { 2 } \right)
					\right)^{ 2 }
					\nonumber
					\\
					&=
					\left(
					\cos \frac { \pi k } { n } \cos^{ 2 } \frac { \pi l } { n } -
					\sin \frac { \pi k } { n } \sin^{ 2 } \frac { \pi l } { n }
					\right)^{ 2 }
					\nonumber
					\\
					&\overset{\eqref{eq: Probability Gambler Wins}}{=}
					P_{ G } ( k, l )
					\\
					&\text{and}
					\nonumber
					\\
					P_{ G } ( k, l^{ \ast } )
					&\overset{\eqref{eq: Probability Gambler Wins}}{=}
					\left(
					\cos \frac { \pi k } { n } \cos^{ 2 } \frac { \pi l^{ \ast } } { n } -
					\sin \frac { \pi k } { n } \sin^{ 2 } \frac { \pi l^{ \ast } } { n }
					\right)^{ 2 }
					\nonumber
					\\
					&=
					\left(
					\cos \frac { \pi k } { n } \cos^{ 2 } \left( \frac { 2 \pi l - \pi n} { 2 n } \right) -
					\sin \frac { \pi k } { n } \sin^{ 2 } \left( \frac { 2 \pi l - \pi n} { 2 n } \right)
					\right)^{ 2 }
					\nonumber
					\\
					&=
					\left(
					\cos \frac { \pi k } { n } \cos^{ 2 } \left( \frac { \pi l } { n } - \frac { \pi } { 2 } \right) -
					\sin \frac { \pi k } { n } \sin^{ 2 } \left( \frac { \pi l } { n } - \frac { \pi } { 2 } \right)
					\right)^{ 2 }
					\nonumber
					\\
					&=
					\left(
					\cos \frac { \pi k } { n } \sin^{ 2 } \frac { \pi l } { n } -
					\sin \frac { \pi k } { n } \cos^{ 2 } \frac { \pi l } { n }
					\right)^{ 2 }
					\nonumber
					\\
					&\overset{\eqref{eq: Probability Tosser Wins}}{=}
					P_{ T } ( k, l )
					\ .
				\end{align}
	\end{itemize}
	Ergo, for every rotation $r^{ k }$, $0 \leq k \leq n - 1$, chosen by the tosser during Phase $1$, both players end up with the same probability distribution.
\end{proof}

\begin{example} [Playing in the group $\langle \frac { 2 \pi } { 16 } \rangle$ continued] \label{xmp: Playing in the Group 2 pi By 16 Phase 2}
	This example continues and concludes Example \ref{xmp: Playing in the Group 2 pi By 16 Phase 1} by examining Phase $2$ (see Figure \ref{fig: The QC for the QTG}) of the QTG under the same assumption that the game is played in the cyclic rotation group $\langle \frac { 2 \pi } { 16 } \rangle$.

	Recall that Figure \ref{fig: Phase 1 Probabilities} visualizes the tosser's and the gambler's probabilities to win at the end of Phase $1$ as a function of the rotation $r^{ k }$, $0 \leq k \leq n - 1$, chosen by the tosser, and that the numerical values of the players' probabilities are contained in Table \ref{tbl: Phase 1 Probabilities}.

	Let us consider the scenario in which the tosser chose the rotation $r^{ 6 } = \frac { 12 \pi } { 16 }$ in Phase $1$. This choice was implemented via the use of the rotation gate $R_{ y } ( \frac { 12 \pi } { 16 } )$, which, in reality, causes a rotation through an angle $\frac { 6 \pi } { 16 }$ around the $y$ axis of the Bloch sphere, and gave player $1$ probability $0.854$ to win at the end of Phase $1$, as shown in Table \ref{tbl: Phase 1 Probabilities}. The motivation for such a choice could be that the tosser, being somewhat greedy, wanted a clear advantage over the gambler, while still allowing the latter a nonzero, but very slim, probability to win so as to keep appearances. However, player $2$ can still turn the tables and gain the upper hand by picking an appropriate measurement basis. This fact is visualized in Figure \ref{fig: Phase 2 Probabilities for k = 6}, which depicts both players' probabilities as a function of the measurement basis. The exact numerical values for each possible choice of player $2$ are given in Table \ref{tbl: Phase 2 Probabilities for k = 6}. With that in mind, it becomes evident that, if player $2$ picks $B_{ \frac { 16 \pi } { 12 } }$ as measurement basis, his probability to win becomes $0.854$, equal to the tosser's probability at the end of Phase $1$, which is another verification of Theorem \ref{thr: The QTG is Fair}.

	A final observation is that the tossers choice has affected the range of probabilities of both players. Although both have the same probability distribution as dictated by Theorem \ref{thr: The QTG is Fair}, the maximum probability for each player at the end of Phase $2$ is $0.854$, as determined by tosser's initial choice of $R_{ y } ( \frac { 12 \pi } { 16 } )$ during Phase $1$.
	\hfill $\triangleleft$
\end{example}

\begin{tcolorbox}
	[
		grow to left by = 1.00 cm,
		grow to right by = 1.00 cm,
		colback = MagentaVeryLight!03,			
		enhanced jigsaw,						
		sharp corners,
		toprule = 1.0 pt,
		bottomrule = 1.0 pt,
		leftrule = 0.1 pt,
		rightrule = 0.1 pt,
		sharp corners,
		center title,
		fonttitle = \bfseries
	]
	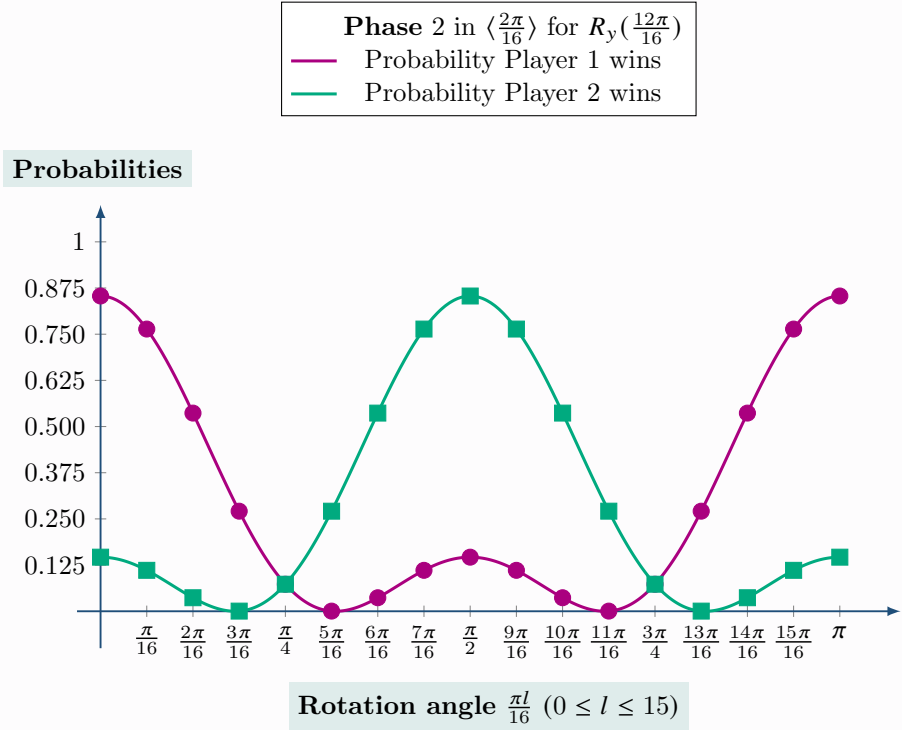
\begin{figure}[H]
		\centering
		\pgfplotsset { compat = 1.16 }
		\begin{tikzpicture} [ scale = 1.00 ]
			\def \n { 16 }
			\begin{axis}
				[
				width = 7.00 cm, height = 5.50 cm,
				axis line style = { WordBlueDarker, line width = 0.25 mm },
				axis x line = center,
				axis y line = middle,
				axis line style = { ->, > = { latex [ length = 7pt ] } },
				xmin = - 0.1, xmax = 3.40,
				ymin = - 0.1, ymax = 1.1,
				xtick =
				{
					pi * 1 / 16, pi * 2 / 16, pi * 3 / 16, pi * 4 / 16,
					pi * 5 / 16, pi * 6 / 16, pi * 7 / 16, pi * 8 / 16,
					pi * 9 / 16, pi * 10 / 16, pi * 11 / 16, pi * 12 / 16,
					pi * 13 / 16, pi * 14 / 16, pi * 15 / 16, pi * 16 / 16
				},
				xticklabels =
				{
					$\frac { \pi } { 16 }$, $\frac { 2 \pi } { 16 }$,
					$\frac { 3 \pi } { 16 }$, $\frac { \pi } { 4 }$,
					$\frac { 5 \pi } { 16 }$, $\frac { 6 \pi } { 16 }$,
					$\frac { 7 \pi } { 16 }$, $\frac { \pi } { 2 }$,
					$\frac { 9 \pi } { 16 }$, $\frac { 10 \pi } { 16 }$,
					$\frac { 11 \pi } { 16 }$, $\frac { 3 \pi } { 4 }$,
					$\frac { 13 \pi } { 16 }$, $\frac { 14 \pi } { 16 }$,
					$\frac { 15 \pi } { 16 }$, $\pi$
				},
				ytick = { 0, 0.125, 0.25, 0.375, 0.50, 0.625, 0.75, 0.875, 1.00 },
				yticklabels = { 0, 0.125, 0.250, 0.375, 0.500, 0.625, 0.750, 0.875, 1 },
				x label style =
				{
					at = { ( axis description cs: 0.5, -0.075 ) }, fill = WordVeryLightTeal, anchor = north
				},
				xlabel = { \textbf{Rotation angle} $\frac { \pi l } { 16 } \ (0 \leq l \leq 15)$ },
				y label style =
				{
					at = { ( axis description cs: 0.025, 1.040 ) }, fill = WordVeryLightTeal, anchor = south
				},
				ylabel = { \textbf{Probabilities} },
				x post scale = 2.0,
				y post scale = 1.5,
				grid = minor,
				trig format plots = rad,
				domain = 0 * pi : pi,
				samples = 800,
				declare function =
				{
					t6(\x) = ( - sin( 6 * pi / \n ) * cos( \x )^2 + cos( 6 * pi / \n ) * sin( \x )^2 )^2;
					g6(\x) = ( - sin( 6 * pi / \n ) * sin( \x )^2 + cos( 6 * pi / \n ) * cos( \x )^2 )^2;
				},
				legend style = { at = { ( 0.5, 1.2 ) }, anchor = south }
				]
				\addlegendimage { empty legend }
				\addlegendentry { \textbf{Phase} $2$ in $\langle \frac { 2 \pi } { 16 } \rangle$ for $R_{ y } ( \frac { 12 \pi } { 16 } )$ }
				\addplot [ RedPurple, line width = 0.4 mm ] { t6(\x) };
				\addlegendentry { Probability Player $1$ wins }
				\addplot [ GreenLighter2, line width = 0.4 mm ] { g6(\x) };
				\addlegendentry { Probability Player $2$ wins }
				\addplot [ color = RedPurple, draw = none, mark size = 3 pt, mark = * ] coordinates
				{
					( 0, { t6 ( 0 ) } )
					( pi * 1 / 16, { t6 ( pi * 1 / 16 ) } )
					( pi * 2 / 16, { t6 ( pi * 2 / 16 ) } )
					( pi * 3 / 16, { t6 ( pi * 3 / 16 ) } )
					( pi * 4 / 16, { t6 ( pi * 4 / 16 ) } )
					( pi * 5 / 16, { t6 ( pi * 5 / 16 ) } )
					( pi * 6 / 16, { t6 ( pi * 6 / 16 ) } )
					( pi * 7 / 16, { t6 ( pi * 7 / 16 ) } )
					( pi * 8 / 16, { t6 ( pi * 8 / 16 ) } )
					( pi * 9 / 16, { t6 ( pi * 9 / 16 ) } )
					( pi * 10 / 16, { t6 ( pi * 10 / 16 ) } )
					( pi * 11 / 16, { t6 ( pi * 11 / 16 ) } )
					( pi * 12 / 16, { t6 ( pi * 12 / 16 ) } )
					( pi * 13 / 16, { t6 ( pi * 13 / 16 ) } )
					( pi * 14 / 16, { t6 ( pi * 14 / 16 ) } )
					( pi * 15 / 16, { t6 ( pi * 15 / 16 ) } )
					( pi * 16 / 16, { t6 ( pi * 16 / 16 ) } )
				};
				\addplot [ color = GreenLighter2, draw = none, mark size = 3 pt, mark = square* ] coordinates
				{
					( 0, { g6 ( 0 ) } )
					( pi * 1 / 16, { g6 ( pi * 1 / 16 ) } )
					( pi * 2 / 16, { g6 ( pi * 2 / 16 ) } )
					( pi * 3 / 16, { g6 ( pi * 3 / 16 ) } )
					( pi * 4 / 16, { g6 ( pi * 4 / 16 ) } )
					( pi * 5 / 16, { g6 ( pi * 5 / 16 ) } )
					( pi * 6 / 16, { g6 ( pi * 6 / 16 ) } )
					( pi * 7 / 16, { g6 ( pi * 7 / 16 ) } )
					( pi * 8 / 16, { g6 ( pi * 8 / 16 ) } )
					( pi * 9 / 16, { g6 ( pi * 9 / 16 ) } )
					( pi * 10 / 16, { g6 ( pi * 10 / 16 ) } )
					( pi * 11 / 16, { g6 ( pi * 11 / 16 ) } )
					( pi * 12 / 16, { g6 ( pi * 12 / 16 ) } )
					( pi * 13 / 16, { g6 ( pi * 13 / 16 ) } )
					( pi * 14 / 16, { g6 ( pi * 14 / 16 ) } )
					( pi * 15 / 16, { g6 ( pi * 15 / 16 ) } )
					( pi * 16 / 16, { g6 ( pi * 16 / 16 ) } )
				};
			\end{axis}
		\end{tikzpicture}
		\caption{This figure shows the probabilities at the end of Phase $2$, assuming the QTG is played in $\langle \frac { 2 \pi } { 16 } \rangle$ and player $1$ has utilized the rotation gate $R_{ y } ( \frac { 12 \pi } { 16 } )$ during Phase $1$.}
		\label{fig: Phase 2 Probabilities for k = 6}
	\end{figure}
\end{tcolorbox}

\begin{tcolorbox}
	[
		grow to left by = 1.00 cm,
		grow to right by = 1.00 cm,
		colback = WordIceBlue!25,	%
		enhanced jigsaw,			
		sharp corners,
		boxrule = 0.1 pt,
		toprule = 0.1 pt,
		bottomrule = 0.1 pt
	]
	\begin{table}[H]
		\renewcommand{\arraystretch}{1.60}
		\caption{This table gives the probabilities for each player at the end of Phase $2$ when QTG is played in $\langle \frac { 2 \pi } { 16 } \rangle$ and player $1$ has utilized the rotation gate $R_{ y } ( \frac { 12 \pi } { 16 } )$ during Phase $1$.}
		\label{tbl: Phase 2 Probabilities for k = 6}
		\centering
		{\small
		\begin{tabular}
			{
				>{\centering\arraybackslash} m{1.50 cm} !{\vrule width 0.25 pt}
				>{\centering\arraybackslash} m{0.85 cm} !{\vrule width 0.25 pt}
				>{\centering\arraybackslash} m{0.85 cm} !{\vrule width 0.25 pt}
				>{\centering\arraybackslash} m{0.85 cm} !{\vrule width 0.25 pt}
				>{\centering\arraybackslash} m{0.85 cm} !{\vrule width 0.25 pt}
				>{\centering\arraybackslash} m{0.85 cm} !{\vrule width 0.25 pt}
				>{\centering\arraybackslash} m{0.85 cm} !{\vrule width 0.25 pt}
				>{\centering\arraybackslash} m{0.85 cm} !{\vrule width 0.25 pt}
				>{\centering\arraybackslash} m{0.85 cm} !{\vrule width 0.25 pt}
				>{\centering\arraybackslash} m{0.85 cm} !{\vrule width 0.25 pt}
			}
			\Xhline{4\arrayrulewidth}
			\multicolumn{10}{c}
			{ \cellcolor[HTML]{000000} { \color[HTML]{FFFFFF} Probabilities at the end of Phase $2$ in $\langle \frac { 2 \pi } { 16 } \rangle$ } }
			\\
			\Xhline{\arrayrulewidth}
			{ \cellcolor[HTML]{000000} { \color[HTML]{FFFFFF} $k$ } }
			&
			{ \cellcolor[HTML]{000000} { \color[HTML]{FFFFFF} $0$ } }
			&
			{ \cellcolor[HTML]{000000} { \color[HTML]{FFFFFF} $1, 15$ } }
			&
			{ \cellcolor[HTML]{000000} { \color[HTML]{FFFFFF} $2, 14$ } }
			&
			{ \cellcolor[HTML]{000000} { \color[HTML]{FFFFFF} $3, 13$ } }
			&
			{ \cellcolor[HTML]{000000} { \color[HTML]{FFFFFF} $4, 12$ } }
			&
			{ \cellcolor[HTML]{000000} { \color[HTML]{FFFFFF} $5, 11$ } }
			&
			{ \cellcolor[HTML]{000000} { \color[HTML]{FFFFFF} $6, 10$ } }
			&
			{ \cellcolor[HTML]{000000} { \color[HTML]{FFFFFF} $7, 9$ } }
			&
			{ \cellcolor[HTML]{000000} { \color[HTML]{FFFFFF} $8$ } }
			\\
			\Xhline{3\arrayrulewidth}
			{ \cellcolor[HTML]{000000} { \color[HTML]{FFFFFF} Tosser } }
			&
			$0.854$
			&
			$0.764$
			&
			$0.537$
			&
			$0.271$
			&
			$0.073$
			&
			$0$
			&
			$0.037$
			&
			$0.111$
			&
			$0.146$
			\\
			\Xhline{\arrayrulewidth}
			{ \cellcolor[HTML]{000000} { \color[HTML]{FFFFFF} Gambler } }
			&
			$0.146$
			&
			$0.111$
			&
			$0.037$
			&
			$0$
			&
			$0.073$
			&
			$0.271$
			&
			$0.537$
			&
			$0.764$
			&
			$0.854$
			\\
			\Xhline{4\arrayrulewidth}
		\end{tabular}
		}
		\renewcommand{\arraystretch}{1.0}
	\end{table}
\end{tcolorbox}

The last observation of Example \ref{xmp: Playing in the Group 2 pi By 16 Phase 2}, can be cast in the form of the next Corollary \ref{crl: Tosser's Choice Determines the Maximum Probability}.

\begin{corollary} [Tosser's choice determines the maximum probability] \label{crl: Tosser's Choice Determines the Maximum Probability}
	The tosser's choice of the rotation gate $R_{ y } ( \frac { 2 \pi k } { n } )$ during Phase $1$, determines the maximum probability at the end of Phase $2$ for both players.
\end{corollary}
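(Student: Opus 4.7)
The plan is to reduce both probability functions \eqref{eq: Probability Tosser Wins} and \eqref{eq: Probability Gambler Wins} to simple quadratics in a single scalar variable once the tosser's parameter $k$ has been fixed, and then read off the maximum over the gambler's choices as a quantity depending on $k$ alone. Concretely, for fixed $k$ set $a = \cos \frac{\pi k}{n}$ and $b = \sin \frac{\pi k}{n}$, and introduce $u = \sin^{2} \frac{\pi l}{n}$, so that $\cos^{2} \frac{\pi l}{n} = 1 - u$ by the Pythagorean identity.

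Substituting into \eqref{eq: Probability Tosser Wins} and \eqref{eq: Probability Gambler Wins} and expanding, a routine one-line calculation yields
\begin{align*}
    P_{T}(k, l) = \bigl( (a+b)\, u - b \bigr)^{2}
    \quad \text{and} \quad
    P_{G}(k, l) = \bigl( (a+b)\, u - a \bigr)^{2}.
\end{align*}
Both are upward-opening parabolas in $u$ with leading coefficient $(a+b)^{2} \geq 0$, so on the interval $[0, 1]$ each attains its maximum at an endpoint. At $u = 0$, realized by the choice $l = 0$, one gets $P_{T} = b^{2}$ and $P_{G} = a^{2}$; at $u = 1$, realized by $l = n/2$ whenever $n$ is even (as in the running Example \ref{xmp: Playing in the Group 2 pi By 16 Phase 2}), one gets $P_{T} = a^{2}$ and $P_{G} = b^{2}$. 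In either case,
\begin{align*}
    \max_{l} P_{T}(k, l) = \max_{l} P_{G}(k, l) = \max \bigl( \cos^{2} \tfrac{\pi k}{n}, \sin^{2} \tfrac{\pi k}{n} \bigr),
\end{align*}
which is manifestly a function of $k$ only, proving the corollary. The equality of the two maxima is of course consistent with Theorem \ref{thr: The QTG is Fair}, which already guarantees that the two players face identical probability distributions as $l$ varies; and the common value equals precisely the winning probability at the end of Phase $1$ of whichever player the tosser's rotation happened to favor, as computed in Example \ref{xmp: Playing in the Group 2 pi By 16 Phase 1}.

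There is no genuine obstacle; the whole argument reduces to the algebraic simplification displayed above, combined with the elementary fact that a convex quadratic on an interval is maximized at an endpoint. The only mild caveat worth mentioning is that when $n$ is odd the value $u = 1$ is not exactly attained by any admissible $l$, so the sharp bound $\max(a^{2}, b^{2})$ must be replaced by its nearest discrete approximation; however, the qualitative conclusion — that $\max_{l} P_{T}(k, l)$ and $\max_{l} P_{G}(k, l)$ are determined by $k$ alone — is immediate in every case, since these quantities are, by definition, suprema over $l$ of functions of $(k, l)$.
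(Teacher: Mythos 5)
Your proof is correct, and it actually goes further than the paper does: the paper states this corollary with no proof at all, presenting it merely as a generalization of the numerical observation at the end of Example \ref{xmp: Playing in the Group 2 pi By 16 Phase 2} (where the value $0.854$ appears as the common maximum for both players). Your substitution $u = \sin^{2}\frac{\pi l}{n}$, which collapses \eqref{eq: Probability Tosser Wins} and \eqref{eq: Probability Gambler Wins} to the convex quadratics $\bigl((a+b)u-b\bigr)^{2}$ and $\bigl((a+b)u-a\bigr)^{2}$, is a genuine and complete general argument; I checked the algebra and it is right, including the degenerate case $a+b=0$ where both expressions become constants. What your approach buys beyond the paper's bare assertion is the explicit identification of the maximum as $\max\bigl(\cos^{2}\frac{\pi k}{n}, \sin^{2}\frac{\pi k}{n}\bigr)$, which both makes the dependence on $k$ alone transparent and recovers the $0.854$ of the example ($k=6$, $n=16$ gives $\sin^{2}\frac{6\pi}{16}\approx 0.854$). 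Your caveat about odd $n$ is also correctly placed: the endpoint $u=1$ requires $l=n/2$, so for odd $n$ the discrete maximum over admissible $l$ may fall slightly short of $\max(a^{2},b^{2})$ on one side, though it remains a function of $k$ only, which is all the corollary claims. The one thing I would tighten is the closing remark that the claim is ``immediate \ldots by definition'' because a supremum over $l$ is automatically a function of $k$: that reading makes the corollary vacuous, whereas the intended (and interesting) content is the explicit common value you computed, so I would lead with that rather than the tautological fallback.
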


\section{Discussion and conclusions} \label{sec: Discussion and Conclusions}

This paper introduces Quantum Tapsilou, a new quantum game inspired by the traditional Greek coin-tossing game known as tapsilou. While the classical version of the game features a straightforward and equal probability distribution, where both players have an equal $\frac { 1 } { 4 }$ probability of winning, Quantum Tapsilou introduces a new level of complexity by leveraging principles from quantum mechanics. In Quantum Tapsilou, both players still maintain an equal chance of winning, but these probabilities are now influenced by their prior choices and the utilization of quantum principles. The following two key innovations are at the core of Quantum Tapsilou.

Entanglement via rotation gates instead of relying on the commonly used Hadamard gates for creating entangled states with equal probability amplitudes. Quantum Tapsilou employs rotation gates, specifically $R_y$ rotation gates in order to generate Bell-like states with unequal probability amplitudes, introducing a layer of complexity into the game dynamics. Quantum Tapsilou makes integral use of groups in the sense that both players agree on a specific cyclic rotation group of order $n$, where $n$ should be a sufficiently large integer to provide players with additional choices and enhance the suspense of the game. This group serves as the foundation of the game and both players select rotations from this group to execute their actions using the corresponding $R_y$ rotation gates. The selection of this group and the specific rotations within it significantly affect the outcome, introducing a strategic element based on group theory.

By incorporating these quantum principles and group theory, Quantum Tapsilou preserves the main aspect of the classical tapsilou game while introducing complexity and variability into the probabilities of winning. Players must carefully consider their choices of rotations within the agreed-upon group and the entangled states generated by the $R_y$ rotation gates to maximize their chances of winning in this intriguing quantum adaptation of a traditional Greek game.
\bibliographystyle{ieeetr}
\bibliography{QTG}

\end{document}